\documentclass{article}
% if you need to pass options to natbib, use, e.g.:
    \PassOptionsToPackage{numbers, compress}{natbib}
% before loading neurips_2020

% ready for submission
% \usepackage{neurips_2020}

% to compile a preprint version, e.g., for submission to arXiv, add add the
% [preprint] option:
%     \usepackage[preprint]{neurips_2020}

% to compile a camera-ready version, add the [final] option, e.g.:
%     \usepackage[final]{neurips_2020}

% to avoid loading the natbib package, add option nonatbib:
     \usepackage[final]{neurips_2020}

\usepackage[utf8]{inputenc} % allow utf-8 input
\usepackage[T1]{fontenc}    % use 8-bit T1 fonts
\usepackage{hyperref}       % hyperlinks
\usepackage{url}            % simple URL typesetting
\usepackage{booktabs}       % professional-quality tables
\usepackage{amsfonts}       % blackboard math symbols
\usepackage{nicefrac}       % compact symbols for 1/2, etc.
\usepackage{microtype}      % microtypography

\usepackage{amsfonts,amsthm,hyperref,amsmath,algorithm}
\usepackage{graphicx}
\usepackage[font=small,labelfont=bf]{caption, subcaption}

\newtheorem{definition}{Definition}
\newtheorem{lemma}{Lemma}
\newtheorem{fact}{Fact}
\newtheorem{proposition}{Proposition}
\newtheorem{theorem}{Theorem}
\newtheorem*{theorem*}{Theorem}
\newtheorem{example}{Example}

\newtheorem{question}{Question}

\newcommand{\eps}{\epsilon}
\DeclareMathOperator*{\argmax}{arg\,max}
\newcommand{\E}{\mathop{\mathbb{E}}}
\newcommand{\PP}{P}
\title{Worst-Case Analysis for Randomly Collected Data}

% The \author macro works with any number of authors. There are two commands
% used to separate the names and addresses of multiple authors: \And and \AND.
%
% Using \And between authors leaves it to LaTeX to determine where to break the
% lines. Using \AND forces a line break at that point. So, if LaTeX puts 3 of 4
% authors names on the first line, and the last on the second line, try using
% \AND instead of \And before the third author name.

\author{%
  Justin Y. Chen\\%\thanks{Use footnote for providing further information    about author (webpage, alternative address)---\emph{not} for acknowledging    funding agencies.} \\
  %\\Department of Computer Science\\
 MIT\\
  \texttt{justc@mit.edu} \\
  % examples of more authors
   \And
   Gregory Valiant \\
  Stanford University \\
  \texttt{gvaliant@cs.stanford.edu}\\
  \And
  Paul Valiant\\ IAS and Purdue University\\ \texttt{pvaliant@gmail.com}
  % Address \\
  % \texttt{email} \\
  % \AND
  % Coauthor \\
  % Affiliation \\
  % Address \\
  % \texttt{email} \\
  % \And
  % Coauthor \\
  % Affiliation \\
  % Address \\
  % \texttt{email} \\
  % \And
  % Coauthor \\
  % Affiliation \\
  % Address \\
  % \texttt{email} \\
}

\begin{document}

\maketitle

\begin{abstract}
%   The abstract paragraph should be indented \nicefrac{1}{2}~inch (3~picas) on
%   both the left- and right-hand margins. Use 10~point type, with a vertical
%   spacing (leading) of 11~points.  The word \textbf{Abstract} must be centered,
%   bold, and in point size 12. Two line spaces precede the abstract. The abstract
%   must be limited to one paragraph.
%In order to estimate a statistic from a sample of data, we often assume that each data point is similarly distributed.
We introduce a framework for statistical estimation that leverages knowledge of how samples are collected but makes no distributional assumptions on the data values.
Specifically, we consider a population of elements $\{1,\ldots,n\}$ with corresponding values $x_1,\ldots,x_n$.
We observe the values for a \emph{sample} set $A \subset \{1,\ldots,n\}$ and wish to estimate some statistic of the values for a \emph{target} set $B \subset \{1,\ldots,n\}$ where $B$ could be the entire set.
Crucially, we assume that the sets $A$ and $B$ are drawn according to some known joint distribution $(A,B)\sim\PP$ over pairs of subsets of $\{1,\ldots, n\}.$
A given estimation algorithm is evaluated based on its \emph{worst-case, expected error} where the expectation is with respect to the distribution $\PP$ from which the sample $A$ and target set $B$ are drawn, and the worst-case is with respect to the data values $x_1,\ldots,x_n$.
Within this general framework we give an efficient algorithm to find an estimator for the target mean, as a weighted combination of the input sample–-where the weights are a function of the distribution $\PP$ and the identities of the elements in the sample and target sets $A, B$. We show that the worst-case expected error achieved by this estimator is at most a multiplicative $\pi/2$ factor worse than the optimum for such estimators.  A component of this algorithm can also be used to approximate the worst-case expected error of a given estimator.
The algorithm and proof leverage a surprising connection to the Grothendieck problem. We extend these results to the setting of linear regression, where each datapoint is not a scalar but a labeled vector $(x_i,y_i) \in \mathbb{R}^{d+1}$.
Our framework, which makes no distributional assumptions on the data values but rather relies on knowledge of the data collection process via the distribution $\PP$, is a significant departure from the typical statistical estimation framework and introduces a uniform algorithmic analysis for the many natural settings where membership in a sample may be correlated with data values, such as when probabilities of sampling vary as in ``importance sampling'', when individuals are recruited into a sample via a social network as in ``snowball sampling'' or ``respondent-driven sampling''~\cite{goodman1961snowball, heckathorn2002respondent} or when samples have chronological structure as in ``selective prediction''~\cite{drucker2013high,qiao2019theory}. We experimentally demonstrate the benefit of this framework and our algorithm in comparison to standard estimators, for several such settings.%###
\end{abstract}

\section{Modeling Data Collection}\label{sec:intro}
For many real-world estimation or prediction problems, it is not unreasonable to assume that data values are drawn independently from some underlying distribution. Correspondingly, there is an enormous body of work developing algorithms suited for such settings as well as for related settings with slightly weaker assumptions such as exchangeability or assumptions found in various models in robust statistics or robust learning. %###
By contrast, there are also many settings in which we know very little about the underlying data values, and any sort of distributional assumption would be problematic.
For such settings, however, we might have some knowledge or control over the process by which data is collected.
How can we design algorithms for estimating basic statistics that are optimal for a given data collection process?  For which data collection processes is accurate estimation or learning possible, \emph{even for worst-case data}?  Surprisingly, there seems to be little work on such questions.

We introduce a general framework in which to study these questions.
Consider a set of $n$ indices $\{1,...,n\}$ with corresponding values $x = \{x_1,...,x_n\}$, and a distribution $\PP$ over pairs of subsets $(A, B)$ where $A, B \subset \{1,...,n\}$.
We call $A$ the \emph{sample} set and $B$ the \emph{target} set.
At inference time, a pair is drawn $(A, B) \sim \PP$ and we are given access to the data values in $x$ indexed by set $A$, namely $x_{A}$.
Our task is to use $A, B, \text{ and } x_{A}$ to estimate some statistic of the values indexed by set B, $\sigma(x_B)$.
For example, if $\sigma()$ is the arithmetic mean and $B = \{1,...,n\}$, then our goal is to use the sample to estimate the population mean.

Given an estimator, we consider the \emph{worst-case, expected error}.
The worst-case is with respect to the data values $\{x_i\}$, and the expectation is with respect to the distribution over subsets $P$.  This model corresponds to the setting where we have knowledge of the sample collection process, but do not make distributional assumptions on the data values.  Throughout, we will be interested in both the question of evaluating the worst-case expected error of a given estimator, as well as the more challenging task of computing the best estimator for a given $P$---namely the estimator that minimizes this worst-case expected error under the sampling distribution $P$.

%For the many real-world settings where we lack knowledge of the data collection process, $P$, we can still leverage this framework to evaluate the worst-case expected error of a proposed estimator, with respect to various plausible distributions, $P$.
%ur framework and algorithm can be used to rigorously evaluate the stability of an estimator (in the above example, the empirical mean) with respect to various ``plausible" $P$'s. We think this is a useful alternative to more standard approaches that evaluate the estimator with respect to other kinds of assumptions on the data.
%knowledge For the many %Hybrid settings, analyzing the interplay between knowledge about the sampling process and knowledge about the data are a ripe subject for future work. %###

Below, we illustrate how this framework captures common data collection processes including
processes in which certain individuals are biased towards or against being sampled,
processes with dependencies such as "snowball sampling" or "chain sampling" where membership in the sample is governed by stochastic processes (e.g. over a social network),
and settings where samples must satisfy chronological constraints (e.g. one uses a sample from the past to make predictions about a target in the future). We begin with a standard example.

%Ultimately, one might hope to analyze a hybrid model that leverages both distributional assumptions about the data values, as well as knowledge about the sampling process.  In this paper, we focus on the scenario we make no assumptions on the data values and solely leverage knowledge of the sampling process; considering this extreme setting is one natural starting place from which to explore the role of this new type of assumption.

%For the purpose of this paper, we will focus on the most basic setting of our framework where each value, $x_i$, is a real number in the interval $[-1,1]$, and the goal is to estimate the mean of values in the target set.  Understanding which joint distributions $P$ admit algorithms with \emph{subconstant} worst-case expected estimation error is a fundamental question for this framework.  Additionally, the ability to accurately estimate the mean of the target set can be used in a black-box fashion to estimate a number of other useful quantities, enabling applications such as performing linear regression.  We briefly discuss such extensions to more complex settings in Section~\ref{sec:discussion}.

%To motivate this framework and provide some intuition for how existing results can be expressed in this model, we give some illustrative examples:

\begin{example}[Independent Samples]\label{ex:indep}
Consider the setting where the distribution $P$ corresponds to including each element $i \in \{1,\ldots,n\}$ in the sample set $A$ independently with probability $p$, and the goal is to estimate the mean of the target set $B=\{1,\ldots,n\}.$  In this setting, provided the data values $x_1,\ldots,x_n$ are bounded (or have bounded variance in the sense that $\frac{1}{n} \sum_i x_i^2 = O(1)$), then the sample mean will concentrate, and even for worst-case datasets, the estimator that simply returns the sample mean will have expected squared error $O(\frac{1}{pn})$.   In the related setting where each $i$ is included in the sample independently, but with a possibly distinct probability $p_i$, the worst-case expected error framework corresponds to not making assumptions on how the data values $x_i$ vary with the sampling probabilities $p_i$.  An estimator for this setting with small worst-case expected error will have good performance even if the $x_i$'s are, for example, correlated with the $p_i$'s in a pernicious way.
\end{example}

%The following example illustrates the intuition that if the sample and target sets are drawn from disjoint subsets of the population, then accurate estimation is impossible.
\iffalse
\begin{example}\label{ex:non}
Suppose every sample set in the support of $P$ is a subset of $\{1,\ldots,n/2\},$ and every target set is a subset of $\{n/2+1,\ldots,n\}.$  In this case, it is impossible to estimate the target mean with worst-case error less than 1.
To see this, consider the assignment to values where $x_1,\ldots,x_{n/2} = 0,$ and where either $x_{n/2+1}=\ldots=x_n = 1$ or $x_{n/2+1}=\ldots=x_n = -1$ independently with probability $1/2$.
No algorithm can distinguish these cases based on the sample values, and hence the worst-case expected error is $1$, achieved by the trivial estimator that always guesses 0.
\end{example}
\fi
The example above illustrates one way in which accurate estimation is possible, even for worst-case data: namely if the sample mean (and target mean) concentrates.
The following example illustrates that accurate prediction can still be possible, even for a distribution for which the target and sample means have \emph{constant} variance with respect to the randomness in $(A,B)\sim P$. %###
This example captures a setting where the distribution, $P$, respects \emph{chronological} constraints, in the sense that, for any target/sample sets that have non-zero probability under $P$, if $i$ is in the sample set and $j$ is in the target set, $i < j$.
Such constraints mirror the many settings where the sample set corresponds to past data, and the target set corresponds to data that will be received in the future.  Here, the worst-case component of our framework corresponds to not making assumptions that the world is ``stationary''---future elements might not be like past elements.

\begin{example}[``Selective Prediction'']\label{ex:selective}
Consider the joint sample/target  distribution $P$ corresponding to the following process: a time $t$ is drawn uniformly from $\{1,\ldots,n-1\},$ and the sample set is $\{1,\ldots,t\}.$
Then $w$ is drawn uniformly from $\{1,2,4,8,\ldots,2^{\log n - 1}\}$ and  the target set is $\{t+1,\ldots,min(t+w,n)\}$.
This prediction task corresponds to choosing a day at random, and deciding to make a prediction about the average change in the stock market over the next $w$ days.
In this setting, the main results in~\cite{drucker2013high,qiao2019theory} imply that if the data values are all bounded, then there exists an estimator for the target mean whose worst-case expected squared error is $O(1/\log n)$, and that this is optimal to constant factors.
The prediction algorithm achieving this performance is extremely simple: when asked to predict the mean of the next $w$ data items after $t$, return the mean of the most recent $w$ sample values, $\frac{1}{w}\sum_{i=t-w+1}^t x_i$.
%The results of~\cite{qiao2019theory} further imply that this error is the best that can be achieved for any distribution, $P$, that (1) respects the chronological constraint that sample elements must have lower indices than target elements and (2) respects the constraint that target elements must form a contiguous block of indices.
The surprising aspect of this example is that subconstant expected error is achievable, despite there being no distributional assumptions on the values and hence no guarantees that future data are like past data.
The randomness in both time $t$ and the window length $w$ of the target set (which define the distribution $P$) are both essential for achieving subconstant worst-case expected accuracy: if either $t$ or $w$ is any fixed value, then the worst-case estimation error becomes constant.
\end{example}

A third example that fits cleanly within our framework is the class of data collection schemes referred to as \emph{snowball sampling}, \emph{respondent driven sampling} or \emph{chain sampling}~\cite{goodman1961snowball, heckathorn2002respondent}.
In such a scheme, people who have contributed data are asked (or incentivized) to recruit their  acquaintances to contribute data, and the pool of respondents grows, like a snowball.
These schemes are frequently used to collect data from sensitive populations, such as drug users.  Our framework provides a natural way to design and evaluate estimators for data collected via such sampling processes:

\begin{example}[Snowball Sampling]
\label{ex:snowball}
  Suppose elements $\{1,\ldots,n\}$ are located at nodes of a social network.  A sample is drawn by independently selecting one (or several) indices; each element in the sample then repeatedly ``recruits'' each of its friends in the social network (say independently with probability $p$).
  The sample will then correspond to the elements that have been recruited in the first $t$ iterations of this ``viral" process.
  The target set could correspond to those nodes recruited in iterations $t+1,\ldots t+w$ for some horizon $w$, or the target set could be the entire population, $\{1,\ldots,n\}$.
  How do structural properties of the underlying social network translate into positive or negative results on the worst-case expected performance of standard estimators, or an optimal estimator?
  And how does such an optimal estimator leverage knowledge of the network structure?  While we do not have simple rules-of-thumb for these questions, our main results apply to such snowball sampling processes, and we evaluate our algorithm empirically in such a setting in Section~\ref{sec:experiments}.
\end{example}

While the notion of worst-case expected error is in terms of a sampling distribution, $P$, one of the main motivations for this framework is to guide the choice of estimator for the many real-world settings in which the actual sampling distribution is \emph{unknown}.  In such settings, one could construct various different plausible $P$'s---for example capturing different hypothetical types of correlation in inclusion in the sample,  different response rates for different hypothetical demographic groups, etc.
    Using the worst-case expected error framework, one could then rigorously evaluate the stability of potential estimators with respect to these various ``plausible" $P$'s.  This may prove to be a useful alternative to the more standard approach of evaluating estimators with respect to different assumptions on the data values.

\subsection{Summary of Results}

%Before describing our main result, we begin by describing two observations which provide insight into this problem of developing worst-case optimal estimation algorithms for a specific distribution over test/train partitions.

Our main results are efficient algorithms for approximately computing the worst-case expected error of a given estimator with respect to a given distribution $P$ (or sample access to $P$), and for finding an estimator that approximately minimizes this error with respect to $P$.  We state these results in the setting where the statistic of interest is the \emph{mean} of the target set.  Such results immediately extend to yield analogous statements for any statistic that can be expressed as the average of some functional of each data point (e.g. the variance, or higher moments).  We further apply these results in a black-box fashion to yield results in the setting where each datapoint is a labeled vector $(x_i,y_i)$, and the ``statistic'' of interest is the optimal linear regression model $\beta$ for the target set, so that $y_i\approx x_i^\top \beta$.

%Our main result is an algorithm which, given the distribution $P$ (or sample access to $P$), returns an estimation algorithm that is optimal in a sense that we describe below. A component of this result is an algorithm which, given $P$ and an estimator, approximates the worst-case expected error of that estimator.  We state this result in the setting where the statistic of interest is the \emph{mean} of the target set.  Such a result immediately extends to yield estimators for any statistic that can be expressed as the average of some functional of each data point (e.g. the variance, or higher moments).  We further apply this result in a black-box fashion to yield results in the setting where each datapoint is a labeled vector $(x_i,y_i)$, and the ``statistic'' of interest is the optimal linear regression model $\beta$ for the target set, so that $y_i\approx x_i^\top \beta$.

Our results in the mean estimation setting focus on \emph{semilinear} schemes, in which estimates can be expressed as a linear combination of the sample values, where the linear coefficients can depend arbitrarily on the indices (but not values) of the sample and target sets $A,B$ and the distribution $P$.
%###commented this out because I don't understand it: Note that unconstrained schemes could also depend arbitrarily on the values in the sample $x_A$.

\begin{definition}\label{def:linear}
A \emph{semilinear} estimation algorithm, $L$, is a mapping from a set of sample elements $A=(\alpha_1,\ldots,\alpha_{|A|}) \subset \{1,\ldots,n\}$ and set of target elements $B=(\beta_1,\ldots,\beta_{|B|}) \subset \{1,\ldots,n\}$ to a list of $|A|$ weights, $w_1,\ldots,w_{|A|}$ that may depend on $A$, $B$, and $P$.
The estimate produced by  $L$ when given the sample values $x_A = \{x_{\alpha_1}, ..., x_{\alpha_{|A|}}\}$ is $\sum_{i=1}^{|A|}w_ix_{\alpha_i}$.
\end{definition}

Intuitively, semilinear algorithms work by  mapping the sample/target sets to linear coefficients of the sample values, effectively specifying separate linear estimators for each possible sample/target pair $(A,B)$ supported on $P$.
%For any such algorithm and for a fixed pair of sample/target sets (assuming they have a nontrivial Jaccard distance), there exists some assignment to values $x$ such that the linear estimator corresponding to that pair will incur large error.
A good semilinear algorithm will produce a set of estimators such that there is no assignment to values $x$ that incurs high error on many of the possible sample/target sets.
Such an algorithm will have low worst-case expected error, which is our goal.

Although the class of semilinear algorithms is a restriction, it is a natural starting point for studying this general framework.  Additionally, nearly all estimation algorithms that we are aware of fall into this class. The sample mean is trivially semilinear; the more surprising constant-factor optimal scheme in the selective prediction setting of Example~\ref{ex:selective} produces a semilinear algorithm where the weights depend on the target indices, not just on the sample indices (recall that the returned estimate is the average of the $w$ highest-indexed sample values, where $w$ is the size of the target set). In the setting of ``importance sampling'' where each element $i$ is included in the sample independently, with probability $p_i$, standard estimators are also semilinear in that they return a weighted mean of the sample values where the weight of $x_i$ is typically a function of $p_i$.  Our notion of semilinear in this setting is even more general, in that the weights given to a sample value $x_i$ can depend arbitrarily on both $p_i$, as well as the other probabilities $\{p_{j\neq i}\}$ and the specific set of sampled indices.  %If we add experiment...As we show in blah, this extra flexibility does yield improved blah, even though inclusion in the sample is independent...
%``Importance sampling" (see Appendix~\ref{app:???}) is a third important and well-studied case: here the distributions of the sample set and target set are independent product distributions (each index $i$ is included in the sample and target set independently, but possibly with different probabilities); standard estimators here are semilinear.
%Studying this class of semilinear algorithms seems a natural starting place for considering the general framework.

Our main result is that there exists an efficient algorithm which, given sample-access to the joint sample/target distribution $P$, returns a semilinear estimation scheme whose worst-case expected error is within a constant factor of the optimal such scheme for $P$.

\begin{theorem}\label{thm1}
Let $P$ denote a distribution over pairs of subsets $(A,B)$ of $\{1,\ldots,n\}$, and let $\eps>0$ be a fixed error parameter.
There is an algorithm $L$ which, given sample-access to $P$ and given sets $A=(\alpha_1,\ldots,\alpha_{|A|})$ and $B=(\beta_1,\ldots,\beta_{|B|}),$ takes $poly(n,1/\eps)$ samples from $P$, runs in time $poly(n,1/\eps)$, and returns a list of $|A|$ weights, $w^{L(A,B)}_1,\ldots,w^{L(A,B)}_{|A|}$, with the following guarantee:

For any values $x = \{x_1,\ldots,x_n\}$ with $|x_i| \le 1$ and with high probability over $L$'s samples from $P$,
the expected squared difference between the estimate $\sum_{i=1}^{|A|} w^{L(A,B)}_i x_{\alpha_i}$ and the mean of the target set $\frac{1}{|B|} \sum_{i=1}^{|B|}x_{\beta_i}$ is within an additive $\eps$ and multiplicative $\pi/2$ factor of the worst-case expected error of the optimal semilinear algorithm. Formally,
\begin{align}
\begin{split}
&\E_{(A,B)\sim P}\left[ \left(\sum_{i=1}^{|A|} w^{L(A,B)}_i x_{\alpha_i}- \frac{1}{|B|}\sum_{i=1}^{|B|} x_{\beta_i}\right)^2\right]\\ &
\le \eps + \frac{\pi}{2}
\left(\inf_{L^{'}:(A,B)\rightarrow \{w^{L^{'}(A,B)}_i\}}
\; \sup_{(x^{'}_1,\ldots,x^{'}_n): |x^{'}_i| \le 1}
\; \E_{(A,B)\sim P}\left[ \left(\sum_{i=1}^{|A|} w^{L^{'}(A,B)}_i x^{'}_{\alpha_i} - \frac{1}{|B|}\sum_{i=1}^{|B|} x^{'}_{\beta_i}\right)^2\right]\right).\nonumber
\end{split}
\end{align}
\end{theorem}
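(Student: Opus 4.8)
The plan is to recognize the worst-case expected error of any fixed semilinear scheme as a quadratic maximization over the hypercube, and then exploit the semidefinite relaxation of this (Grothendieck-type) problem together with a minimax argument. First I would rewrite the error. For a fixed pair $(A,B)$ and weights $w^{(A,B)}$, the signed error $\sum_i w^{(A,B)}_i x_{\alpha_i} - \frac{1}{|B|}\sum_j x_{\beta_j}$ is a linear functional $\langle c^{(A,B)}, x\rangle$ of the data vector $x\in\mathbb{R}^n$, where $c^{(A,B)} = u^{(A,B)} - v^{(A,B)}$ collects the weights on the coordinates of $A$ (the vector $u$) minus the uniform target vector $v$ (with entries $1/|B|$ on $B$). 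Squaring and taking the expectation over $(A,B)\sim\PP$ gives $x^\top M x$ with $M := \E_{(A,B)}[c^{(A,B)}(c^{(A,B)})^\top]\succeq 0$, a single PSD matrix depending on the weights. Since $x\mapsto x^\top M x$ is convex, its maximum over the cube $[-1,1]^n$ is attained at a vertex, so the worst-case expected error of the scheme is exactly $\mathrm{Err}(w) = \max_{x\in\{-1,1\}^n} x^\top M(w)\,x$ — precisely the (little) Grothendieck/Nesterov quadratic-maximization problem.

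Second, I would bring in the standard relaxation $\mathrm{SDP}(w)=\max_{X\succeq 0,\,X_{ii}=1}\langle M(w),X\rangle$ over the elliptope. Because computing $\mathrm{Err}$ exactly is NP-hard in general, the idea is to optimize the relaxation instead. Nesterov's rounding ($x_i=\mathrm{sign}\langle v_i,g\rangle$ for Gaussian $g$, using $\E[x_ix_j]=\tfrac{2}{\pi}\arcsin X_{ij}$ and the fact that $\arcsin(X)-X\succeq 0$ for a correlation matrix $X$, so $\langle M,\arcsin(X)-X\rangle\ge 0$ whenever $M\succeq 0$) yields the sandwich $\mathrm{Err}(w)\le\mathrm{SDP}(w)\le\tfrac{\pi}{2}\,\mathrm{Err}(w)$. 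Choosing weights $w^{\mathrm{alg}}$ to minimize $\mathrm{SDP}(w)$ then gives $\mathrm{Err}(w^{\mathrm{alg}})\le\mathrm{SDP}(w^{\mathrm{alg}})=\min_w\mathrm{SDP}(w)\le\tfrac{\pi}{2}\min_w\mathrm{Err}(w)$, which is the claimed multiplicative $\pi/2$ factor over $\mathrm{OPT}=\inf_w\mathrm{Err}(w)$.

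The crux is then to argue that $\min_w\mathrm{SDP}(w)$ is efficiently solvable and yields per-query weights. I would note that $u^{(A,B)}(w)$ is linear in $w$, so $(u-v)^\top X (u-v)$ is convex in $w$ for each fixed $X\succeq 0$; hence $\mathrm{SDP}(w)$, a supremum of such functions, is convex in $w$, and $\min_w\max_X \E_{(A,B)}[(u-v)^\top X (u-v)]$ is a convex–concave saddle problem over compact convex domains (restricting to a bounded box of weights, which is without loss since large weights only inflate the worst-case error). By Sion's minimax theorem I would swap to $\max_X\min_w$. The key structural payoff is that for a fixed $X$ the inner minimization decouples across pairs — the weights for $(A,B)$ appear only in that pair's term $(u^{(A,B)}(w)-v^{(A,B)})^\top X (u^{(A,B)}(w)-v^{(A,B)})$ — so a single optimal certificate $X^\star$ determines the weights for every query $(A,B)$ via an independent small least-squares problem on the coordinates of $A$. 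The algorithm therefore computes $X^\star$ once and reads off weights on demand, which is exactly what the theorem's interface requires.

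Finally, since only sample access to $\PP$ is available and there may be exponentially many pairs, I would replace $\E_{(A,B)}$ by the empirical average over $\mathrm{poly}(n,1/\eps)$ samples and solve for an empirical certificate $\hat X^\star$. The main obstacle I anticipate is the concentration/uniform-convergence step: one must show that the empirical saddle value is within $\eps$ of the true one uniformly over the (bounded) weights and over the $n\times n$ elliptope, so that the $\hat X^\star$-induced per-query weights achieve $\mathrm{Err}\le\eps+\tfrac{\pi}{2}\,\mathrm{OPT}$ with high probability. This will require bounding the weights a priori, a covering argument over the elliptope, and care that the per-query weights (computed from $\hat X^\star$) are evaluated against the \emph{true} error rather than the empirical one.
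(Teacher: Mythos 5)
Your overall architecture is exactly the paper's: the reduction of a fixed semilinear scheme's worst-case error to the positive semidefinite Grothendieck problem, the elliptope relaxation with Nesterov-style rounding giving the $\pi/2$ sandwich $\mathrm{Err}(w)\le\mathrm{SDP}(w)\le\tfrac{\pi}{2}\mathrm{Err}(w)$ (Proposition~\ref{prop:approx}), Sion's minimax swap, and the observation that for a fixed certificate the inner minimization decouples across pairs so that a single optimal $V^\star$ yields per-query least-squares weights $V^{-1}_{(A,A)}V_{(A)}b$ (Theorem~\ref{thm:full-information} and Algorithm~\ref{alg:linear-full}). The gap is in the step you yourself flag as the main obstacle, which for \emph{this} theorem---the sample-access version---is where the real work lies, and your sketched plan for it fails as stated. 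After the minimax swap, the function to be concentrated is the decoupled objective $g_{A,b}(V)=b^\top\bigl(V-V_{(A)}^\top V_{(A,A)}^{-1}V_{(A)}\bigr)b$; it is bounded in $[0,1]$ and concave in $V$, but it is \emph{not} Lipschitz---indeed not even continuous---on the full elliptope. Take $n=2$, $A=\{1\}$, $b=e_2$, and let $V_{11}\to 0$ along the rank-one path $V_{12}=\sqrt{V_{11}V_{22}}$: then $g\equiv 0$ along the path while $g=V_{22}>0$ at the limit point (where positive semidefiniteness forces $V_{12}=0$). So a covering/uniform-convergence argument over the elliptope does not go through. For the same reason the per-query weights cannot be ``bounded a priori'': in the example the weight is $V_{12}/V_{11}\to\infty$. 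Your alternative of taking uniform convergence ``over the bounded weights'' is also problematic, since a semilinear scheme carries one weight vector per pair in the support of $P$, and that support may be exponentially large, so the weight space is not one over which you can afford a net.

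The paper's fix, which your proposal is missing, is to shrink the domain of the certificate from $V\succeq 0$ to $V\succeq\eps I$ (still convex). Lemma~\ref{lem:epsilon} shows this costs only an additive $\eps$: replace the optimal $V$ by $\eps I_n+(1-\eps)V$ and use linearity of the pre-decoupling objective in $V$ together with the fact that every term lies in $[0,1]$ (Lemma~\ref{lem:bounded}). On this restricted domain $V_{(A,A)}^{-1}$ has spectral norm at most $1/\eps$, so $g_{A,b}$ is $\mathrm{poly}(n)/\eps^2$-Lipschitz in $V$ (Lemma~\ref{lem:derivative}) and the read-off weights are bounded. With that regularization in hand, the rest of your plan goes through essentially as the paper executes it in Theorem~\ref{thm:sampling}: Hoeffding at each point of an $\eps'$-net with $\eps'=\eps^3/\mathrm{poly}(n)$, a union bound over the $e^{\mathrm{poly}(n)/\eps'}$ net points, and a chain of inequalities comparing empirical and true objectives at the empirical optimizer, the true optimizer, and their net neighbors, yielding the additive-$\eps$, multiplicative-$\pi/2$ guarantee from $t=\mathrm{poly}(n)/\eps^5$ samples.
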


As a component of our proof of this theorem, we give an efficient algorithm which approximates the worst-case expected error of any semilinear algorithm to within this multiplicative $\pi/2$ factor (see Proposition~\ref{prop:approx} for a formal statement).  This approximation factor of $\frac{\pi}{2}$ is optimal in the sense that there exist a semilinear estimator and distribution $P$ for which estimating the worst-case expected error to within a $\frac{\pi}{2}$ factor is NP-hard. We discuss this in Section~\ref{sec:fixed-linear}.

While we focus on semilinear schemes, these are not necessarily optimal (see Appendix~\ref{sec:suboptimality} for details): %### following proposition illustrates the existence of instances where there is a gap between the worst-case expected error of the optimal semilinear scheme, and the worst-case expected error of an optimal unconstrained scheme.
\begin{fact}\label{fact:nonlin}
There exists a distribution $P$ for which the optimal semilinear mean estimation scheme achieves a worst-case expected squared error that is larger than the worst-case expected squared error of the optimal (unconstrained) scheme.
\end{fact}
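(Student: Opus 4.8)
The plan is to exhibit a single distribution $P$ on which the best unconstrained estimator strictly beats the best semilinear one. Writing $V_{\mathrm{non}}$ and $V_{\mathrm{lin}}$ for the two optimal worst-case expected errors, we trivially have $V_{\mathrm{non}}\le V_{\mathrm{lin}}$ since every semilinear scheme is a (measurable) estimator; all the work is in making the inequality strict on a carefully chosen instance. First I would recast both quantities through the minimax theorem: the squared loss is convex in the estimator and, letting the adversary randomize the data vector as a distribution $Q$ over the cube $[-1,1]^n$, linear in $Q$, so that $V_{\mathrm{non}}=\max_Q \E_{(A,B)\sim P}[\,\mathrm{Var}_Q(\mu_B\mid x_A)\,]$ and $V_{\mathrm{lin}}=\max_Q \E_{(A,B)\sim P}[\,\mathrm{res}_Q(\mu_B\mid x_A)\,]$, where $\mu_B=\frac{1}{|B|}\sum_{i} x_{\beta_i}$, the optimal unconstrained per-branch response is the posterior mean, the optimal semilinear response is the best \emph{linear} predictor of $\mu_B$ from $x_A$, and $\mathrm{res}_Q$ is the corresponding residual variance. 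The per-branch gap between these two residuals is exactly the failure of the conditional mean $\E_Q[\mu_B\mid x_A]$ to be a linear function of the observed values $x_A$.

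The central structural observation I would then use is that the linear residual $\mathrm{res}_Q$ depends on $Q$ only through its second moments: it is determined by $\mathrm{Cov}_Q$ through the normal equations. Hence the semilinear problem is really a maximization over the set of covariance matrices realizable by distributions on the cube (a correlation/cut polytope), whereas the nonlinear residual sees higher moments of $Q$. Consequently a strict separation is possible precisely when the covariance $\Sigma^{\ast}$ that is worst for semilinear schemes admits \emph{no} realization whose conditional means are linear---in other words, when the adversary cannot ``hide'' the hard covariance inside an independent-type (conditionally linear) distribution with the same second moments. Overcoming this hiding is the crux, and it is exactly where the paper's Grothendieck/correlation-polytope viewpoint enters.

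Concretely, I would build $P$ so that $\Sigma^{\ast}$ lands on a nontrivial face of the correlation polytope whose only realizations are supported on a parity (even-weight) configuration of a few coordinates: there the conditional mean of a held-out coordinate given the others is a genuinely nonlinear parity function, so the unconstrained estimator removes conditional variance that every linear predictor must pay in full. The branches of $P$ would include one ``exploit'' branch that observes several coordinates and targets one additional coordinate, together with ``forcing'' branches whose second-moment demands pin the adversary's covariance to $\Sigma^{\ast}$ and rule out the independent realization. Equivalently, and perhaps cleaner to verify, I would exhibit the optimal semilinear estimator explicitly and show that on some branch its weights have $\ell_1$-norm exceeding one, so that on part of the cube its output leaves $[-1,1]$; since the target mean always lies in $[-1,1]$, clipping the output to $[-1,1]$ is an odd nonlinear map that pointwise dominates the linear estimator and strictly lowers the worst-case expected error.

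It then remains to evaluate the two values on the chosen instance. I would lower bound $V_{\mathrm{lin}}$ by certifying, over all semilinear weight assignments, the worst-case expected error against the cube---a quadratic maximization that the paper already shows how to control and that can be SDP-certified---and upper bound $V_{\mathrm{non}}$ by the worst-case error of the explicit nonlinear (clipped, or posterior-mean) estimator, concluding $V_{\mathrm{non}}<V_{\mathrm{lin}}$. The routine part is these two numerical bounds; the genuine obstacle is the construction in the previous step, namely forcing the semilinear-worst covariance to be realizable only through distributions with nonlinear conditional means, so that the adversary is denied any conditionally linear response and the nonlinear estimator's advantage survives the outer $\max$ over $Q$.
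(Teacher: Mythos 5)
There is a genuine gap here, and it sits exactly where the content of the statement lies. Fact~\ref{fact:nonlin} is an existence claim, and the paper's proof (Appendix~\ref{sec:suboptimality}) consists of nothing more --- and nothing less --- than exhibiting a witness found by computer-aided brute-force search: the distribution on $n=4$ elements giving probability $0.3$ to each of $(\{1,3\},\{2,4\})$, $(\{2,4\},\{1,3\})$, $(\{3,4\},\{1,2\})$ and probability $0.05$ to each of $(\{1,3,4\},\{2\})$, $(\{2,3,4\},\{1\})$, for which the optimal semilinear scheme has worst-case expected squared error $0.6652$ while the optimal unconstrained scheme achieves $0.6627$. Your proposal never produces a distribution $P$. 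The dual picture you set up is sensible --- the trivial inequality between the two values, the minimax swap, the observation that the best linear predictor's residual depends on the adversary's prior $Q$ only through its second moments while the Bayes residual sees higher moments, and the resulting criterion that a gap requires worst-case priors with genuinely nonlinear posterior means --- but you then explicitly defer the single step that would constitute the proof, calling the construction ``the genuine obstacle.'' A proof of an existence statement that stops short of exhibiting (or proving the existence of) the witness proves nothing; what you do supply is the part you yourself label routine.

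Beyond incompleteness, both concrete mechanisms you sketch appear to fail as stated. The parity route founders on a structural fact: the uniform distribution on even-parity $\pm 1$ strings is pairwise independent, so its second-moment matrix is the identity --- the barycenter of the correlation/cut polytope, equally realizable by i.i.d.\ signs whose conditional means are linear --- and since correlations are invariant under a global sign flip, the correlation vectors of parity configurations do not span a proper face at all (for three variables they exhaust the cut polytope's vertices). So the covariances this construction produces are precisely ones the adversary \emph{can} hide inside a conditionally linear realization, which is the failure mode you were trying to engineer around. The clipping route has two unproven steps: you would need an instance in which the \emph{optimal} (not merely some) semilinear scheme uses weights of $\ell_1$-norm exceeding $1$ on a branch --- none is given, and in the known small examples the optimal weights lie well inside the unit $\ell_1$ ball --- and even granting that, pointwise domination only yields a weak inequality between worst-case errors; strictness would further require that at \emph{every} maximizer $x$ of the semilinear scheme's expected error, clipping occurs on a set of branches of positive probability. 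Your reformulation is a genuinely illuminating way to understand why a gap can exist, but as a proof it has a hole exactly where the work is; the paper fills that hole with a four-point counterexample and two numerical evaluations.
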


As we discuss in Section~\ref{sec:discussion}, one  open question is to understand the severity of this gap between semilinear versus arbitrary estimation algorithms. We conjecture this gap is bounded by a small constant; the most extreme gap that we know of, found via an automated  search, is a factor of $1.004$.

\paragraph{Linear Regression.}
We also consider the following natural extension of our results to the  setting of $d$-dimensional linear regression.  As above, there is a joint distribution $P$ over sample/target subsets $A,B \subset \{1,\ldots,n\}.$  Each index $i$ has an associated labeled vector, $(x_i,y_i)$ with $x_i \in \mathbb{R}^d$ and $y_i \in \mathbb{R}.$  For a sample/target pair $A,B$, we observe the data points $(x_i,y_i)$ for $i \in A,$ and the task is to return a linear regression model $\hat{\beta}$ that has small error when applied to the (unobserved) datapoints indexed by the target set $B$.  Specifically, letting $\beta_B$ denote the least-squares regression model for the target set $\{(x_i,y_i)\}_{i \in B}$, the goal is to return $\hat{\beta} \approx \beta_B$.

The following theorem leverages Theorem~\ref{thm1}; the algorithm and proof are given in Appendix~\ref{sec:regression-proof}.

\begin{theorem}\label{thm:regression}
Consider the regression setting described above corresponding to a sample/target distribution $P$, with the additional guarantee that each coordinate of the features $x_i$ and labels $y_i$ are scaled so as to have magnitude at most 1. Let $\alpha_P$ be the mean squared error guaranteed by Theorem~\ref{thm1} for the (scalar) mean estimation setting.  There is a polynomial-time algorithm which, given $A,B$, the datapoints $\{(x_i,y_i)\}_{i \in A}$ and sample-access to $P$, with probability $1-\delta$ returns regression coefficients $\hat{\beta}$ such that $\| \hat{\beta}-\beta_B\| \leq 3\frac{\sqrt{\alpha_P d^3/\delta}}{\sigma_d^2}$, for any $\delta>0$; here $\beta_B$ is the true vector of least-squares coefficients for $\{(x_i,y_i)\}_{i \in B},$ and $\sigma_d$ denotes the smallest singular value of the covariance $\frac{1}{|B|}\sum_{i \in B}x_i x_i^\top.$  In the case that the features $\{x_i\}_{i \in B}$ are known for the target set (and only the $y_i$'s are unknown) then we have the improved bound that, except with $\delta$ probability, $||\hat{\beta}-\beta_B||\leq \frac{\sqrt{\alpha_P d/\delta}}{\sigma_d}$.
\end{theorem}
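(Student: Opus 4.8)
The plan is to reduce the regression problem to several instances of the scalar mean-estimation problem solved by Theorem~\ref{thm1}, and then to control how the resulting estimation errors propagate through the matrix inversion defining the least-squares solution. Write the target least-squares model as $\beta_B = M_B^{-1} v_B$, where $M_B = \frac{1}{|B|}\sum_{i\in B} x_i x_i^\top$ is the $d\times d$ target covariance and $v_B = \frac{1}{|B|}\sum_{i\in B} x_i y_i \in \mathbb{R}^d$. The key observation is that every entry of $M_B$ and of $v_B$ is itself a target-set mean of a bounded functional of a single datapoint: the $(j,k)$ entry of $M_B$ is the mean over $B$ of $z_i := x_{i,j}x_{i,k}$, and the $j$-th entry of $v_B$ is the mean over $B$ of $x_{i,j}y_i$. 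Under the scaling hypothesis all of these functionals have magnitude at most $1$, and each is observable for $i\in A$. Hence I can invoke Theorem~\ref{thm1} as a black box, once per entry, feeding in the appropriate derived scalar values, to obtain estimates $\hat{M}$ and $\hat{v}$ whose per-entry expected squared error (over $(A,B)\sim P$) is at most $\alpha_P$.

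First I would assemble these per-entry guarantees into norm bounds. Summing over the $d^2$ entries of $M_B$ and the $d$ entries of $v_B$ gives $\E[\|\hat M - M_B\|_F^2]\le d^2\alpha_P$ and $\E[\|\hat v - v_B\|^2]\le d\alpha_P$. Markov's inequality, together with a union bound splitting the failure probability $\delta$ across the two events, then yields, except with probability $\delta$, the operator-norm and Euclidean bounds $\|\Delta_M\| \le d\sqrt{\alpha_P/\delta}$ and $\|\Delta_v\|\le\sqrt{d\alpha_P/\delta}$, where $\Delta_M = \hat M - M_B$ and $\Delta_v = \hat v - v_B$. I would also record the a priori bound $\|\beta_B\|\le \sqrt d/\sigma_d$, which follows because each coordinate of $v_B$ is an average of terms bounded by $1$ (so $\|v_B\|\le\sqrt d$) and $\|M_B^{-1}\| = 1/\sigma_d$.

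The heart of the argument is the perturbation identity $\hat\beta - \beta_B = \hat M^{-1}\bigl(\Delta_v - \Delta_M\,\beta_B\bigr)$, obtained by writing $\hat\beta = \hat M^{-1}\hat v$ and substituting $\hat v = v_B + \Delta_v$, $v_B = M_B\beta_B$, and $\hat M = M_B + \Delta_M$. Taking norms gives $\|\hat\beta - \beta_B\| \le \|\hat M^{-1}\|\,(\|\Delta_v\| + \|\Delta_M\|\,\|\beta_B\|)$. The dominant contribution is $\|\hat M^{-1}\|\,\|\Delta_M\|\,\|\beta_B\|$, which after substituting the bounds above scales like $\sqrt{d^3\alpha_P/\delta}/\sigma_d^2$ --- the two factors of $\sigma_d$ coming respectively from $\|\beta_B\|\le\sqrt d/\sigma_d$ and from $\|\hat M^{-1}\|\approx 1/\sigma_d$; collecting constants and the subdominant $\|\Delta_v\|$ term yields the claimed factor of $3$ (using that $\sigma_d\le 1$, since $M_B$ has trace at most $d$). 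In the special case where the target features $\{x_i\}_{i\in B}$ are known, $M_B$ is known exactly, so $\Delta_M=0$ and $\hat M^{-1}=M_B^{-1}$, and the identity collapses to $\|\hat\beta-\beta_B\| = \|M_B^{-1}\Delta_v\|\le \|\Delta_v\|/\sigma_d \le \sqrt{d\alpha_P/\delta}/\sigma_d$, giving the improved bound.

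The main obstacle I anticipate is controlling $\|\hat M^{-1}\|$: the black-box estimates make no promise that $\hat M$ is well-conditioned, or even invertible, so I must lower-bound $\sigma_{\min}(\hat M)$ in terms of $\sigma_d$. This is exactly where the analysis needs the perturbation $\Delta_M$ to be small relative to $\sigma_d$. Concretely, a standard singular-value perturbation (Weyl) bound gives $\sigma_{\min}(\hat M)\ge \sigma_d - \|\Delta_M\|$, so whenever $\|\Delta_M\|\le\sigma_d/2$ we obtain $\|\hat M^{-1}\|\le 2/\sigma_d$; the high-probability bound on $\|\Delta_M\|$ guarantees this regime precisely when $\alpha_P$ is small relative to $\sigma_d^2\delta/d^2$, which is the meaningful regime for the stated guarantee (outside it the right-hand side is already large). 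If one prefers an unconditional statement, I would project $\hat M$ onto $\{\,M\succeq (\sigma_d/2) I\,\}$ before inverting, which enforces the conditioning at no asymptotic cost to the bound.
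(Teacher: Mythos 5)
Your proposal is correct and, at its core, is the same argument as the paper's: every entry of $Q=\mathbb{E}_{i\sim B}[x_ix_i^\top]$ and of $u=\mathbb{E}_{i\sim B}[x_iy_i]$ is a target-set mean of a bounded functional observable on the sample, so Theorem~\ref{thm1} is invoked once per entry; Markov plus a union bound converts the per-entry guarantee $\alpha_P$ into Frobenius/Euclidean norm bounds; and a perturbation analysis of the matrix inverse finishes, with the known-features case handled identically in both proofs ($\Delta_M=0$, single Markov event). The one substantive difference is the perturbation step, where your route is cleaner: you use the exact identity $\hat\beta-\beta_B=\hat{M}^{-1}\left(\Delta_v-\Delta_M\beta_B\right)$ together with Weyl's inequality $\sigma_{\min}(\hat M)\ge\sigma_d-\|\Delta_M\|$ and the a priori bound $\|\beta_B\|\le\sqrt{d}/\sigma_d$, whereas the paper splits $\|Q^{-1}u-(Q+F)^{-1}(u+g)\|$ by the triangle inequality and controls $\|Q^{-1}-(Q+F)^{-1}\|$ by differentiating $(Q+\lambda F)^{-1}$ along the interpolation $\lambda\in[0,1]$; both isolate the same dominant term $\sqrt{\alpha_P d^3/\delta}/\sigma_d^2$, and your version avoids the calculus.

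Two points of alignment. First, your conditioning requirement $\|\Delta_M\|\le\sigma_d/2$ is not an extra weakness relative to the paper: the appendix restatement of this theorem carries the analogous proviso that $3\sqrt{\alpha_P d^3/\delta}/\sigma_d^2\le 0.08$ (elided in the main-body statement), and under that proviso your constants do come out below $3$ --- though to get there you must keep the $\sqrt{2}$ factors you dropped when splitting $\delta$ between the two Markov events, and take $\|\Delta_M\|\le c\,\sigma_d$ with $c$ small enough that $2\sqrt{2}/(1-c)\le 3$, rather than $c=1/2$. Second, your suggested unconditional fix --- projecting $\hat M$ onto $\{M\succeq(\sigma_d/2)I\}$ before inverting --- is not implementable as stated: $\sigma_d$ is the smallest singular value of the \emph{unobserved} target covariance, so the algorithm does not know this set; any such regularization would need a threshold computable from known quantities ($d,\delta,\alpha_P$), and it is not clear this yields an unconditional bound. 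Since the paper makes no unconditional claim either, that side remark should simply be dropped; nothing else in your argument depends on it.
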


\subsection{Related Work}

There has been significant recent effort developing algorithms for estimation and learning with strong performance guarantees beyond the idealized setting of data drawn i.i.d. from a fixed distribution. This includes the recent body of work on \emph{robust} learning and statistics.
Building off a long line of work from the statistics community (see e.g.~\cite{huber1967behavior,tukey1975mathematics}), the models considered in these works assume that datapoints are drawn independently from some distribution of interest, and then an $\alpha$ fraction of datapoints are corrupted arbitrarily/adversarially.
(Some of these works also consider the slightly weaker \emph{contamination} model where the $\alpha$ fraction of arbitrary data is specified before the $1-\alpha$ fraction of i.i.d.\ data is drawn.)
Recent work has developed computationally efficient algorithms for basic estimation and learning tasks in these settings, beginning with estimating the mean and covariance of a high-dimensional Gaussian~\cite{diakonikolas2019robust,lai2016agnostic}, and subsequently considering more general optimization problems over data, including linear regression~\cite{charikar2017learning,steinhardt2018resilience,klivans2018efficient,karmalkar2019list}.
While this line of work relaxes the typical assumption that \emph{all} datapoints are drawn i.i.d.\ from a distribution, these works still rely on the assumption that a significant fraction of the data is drawn from a well-behaved distribution.
From a technical perspective, these works typically proceed by analyzing the structure of the $1-\alpha$ fraction of i.i.d.\ datapoints, and then showing that the adversarial datapoints cannot completely obscure this structure.
In this sense, the distributional assumptions on the $1-\alpha$ fraction of ``good'' data are critically leveraged.

There is also a line of recent work developing algorithms that work on \emph{truncated} data~\cite{daskalakis2018efficient,daskalakis2019computationally} which captures one commonly arising class of dataset that deviates from the i.i.d.\ setting.
Here, the assumption is that data is drawn independently from a ``nice'' distribution---a high-dimensional Gaussian in the case of~\cite{daskalakis2018efficient}---but then the dataset is truncated, revealing only the portion that lies within some specified set.
The challenge is that this conditioning often significantly skews the statistics of the data.
Work on learning from truncated samples differs significantly from the framework considered in our work, in that the positive results in~\cite{daskalakis2018efficient,daskalakis2019computationally} leverage assumed structure of the underlying data: the Gaussian assumption in~\cite{daskalakis2018efficient}, and for~\cite{daskalakis2019computationally} the assumptions of an underlying noisy linear model and that the truncation procedure is only a function of the label of each datapoint.

Beyond the dependencies that truncation introduces, recent work also considers regression in a setting with more complex dependencies, that models the type of dependence that may arise when datapoints correspond to nodes within a network~\cite{daskalakis2019regression}.
In that work, the authors revisit the standard noisy linear regression model with labels $y_i = \theta^T x_i + \eps_i$, and the standard logistic regression model where $Pr[y_i = 1] = 1/(1+exp(\theta^T x_i)).$ Instead of assuming that the $\eps_i$ and logistic outcomes are drawn independently, they consider the case where these are generated in a correlated fashion, corresponding to a known/fixed covariance matrix with an unknown strength parameter.
Despite these dependencies, the authors provide an efficient algorithm for learning the model, $\theta$, in this setting, that still achieves the error guarantees of the independent settings, provided some mild assumptions are satisfied.

Finally, it is worth clarifying the distinction between our framework and the \emph{on-line} learning framework.
As with our framework, much of the work in on-line learning makes no assumptions about the underlying data, and often assumes that the underlying data is adaptively responding to our predictions.
Beyond this, however, the frameworks are quite different: our framework considers the task of making a single prediction, as opposed to a sequence of predictions.
Additionally, we are measuring the performance of  algorithms against all algorithms in their class, instead of in comparison to some set of fixed benchmarks. %###had something about expected error, which I changed to "all algorithms in their class"
%The agnostic learning framework is also significantly different from our setting, as it posits access to independent draws from some (possibly unknown) distribution of examples, and the goal is to return a hypothesis that accurately reflects whatever structure is present in this distribution.
%By contrast, in our framework there is no distribution (or notion of ``true structure'') underlying the values, and instead our distributional assumptions model only the sampling process.

\vspace{-1mm}
\subsection{Discussion and Open Directions}\label{sec:discussion}

We introduce a framework for understanding statistical estimation where we make no assumptions on the data values themselves, but model the process $P$ by which sample and target datasets are collected.  Within this framework, an estimator is evaluated based on its worst-case expected error, where the worst-case is with respect to the data values, and the expectation is with respect to the selection of the sample and target sets according to $P$.   We present algorithms for approximating the worst-case expected error of an estimator with respect to $P$,  and for computing an estimator that approximately minimizes this error, within a broad class of estimators.  In addition to the strong theoretical  guarantees, this algorithm yields estimators  that seems to perform extremely well on several natural synthetic settings where samples are drawn from nontrivial sampling distributions.

There are a number of natural open directions prompted by this work.  Can the algorithms be adapted to have a runtime independent of $n$?  The framework and definition of worst-case expected error certainly applies to the case where the underlying domain is infinite (instead of $\{1,\ldots,n\}$)---can efficient algorithms be developed for that setting?  If data values are constrained to an $\ell_1$ or $\ell_2$ ball (as opposed to the $\ell_{\infty}$ ball corresponding to our assumption that each value is bounded), are simpler algorithms possible?
What is the gap between the worst-case expected error of the best semilinear mean estimation algorithm and the best unconstrained scheme?  It also seems worth considering specific classes of distributions from the perspective offered by our framework.
For example, for $P$ corresponding to snowball sampling over a social network, what network properties imply subconstant estimation error?
Are there variants of snowball sampling that yield significantly better or worse values of the expected estimation error for worst-case data?

Finally, it seems worthwhile extracting  high-level interpretable properties of $P$ that imply the existence of estimators with subconstant (or inverse polynomial) error, or properties that imply that the worst-case expected error will be constant for any estimator.  For the many cases where we have control over how data is collected, such properties could serve to guide the design of these data collection pipelines.%understanding which sampling schemes admit accurate estimators (say those with subconstant worst-case expected errors).  While our main result does characterize which distributions, $P,$ admit semilinear estimation schemes achieving low worst-case expected error, it might be helpful to

\section{Algorithms and Connection to the Grothendieck Problem}
%In this section, we establish the connection between the worst-case performance of a given semilinear mean estimation algorithm, and the Grothendieck problem.  We begin by establishing the notation that will be used throughout the remainder of the paper.

%\subsection{Notation and Preliminaries}

Given elements $\{1,\ldots,n\}$ from which both the sample and target sets are drawn, the main component of our model is a joint distribution $P$ on the sample and target sets $(A,B)$.
Such a joint distribution can be approximated to arbitrary accuracy as an unweighted distribution over a list of pairs $(A_i,B_i)$, and for ease of notation, we adopt this representation here.

\begin{definition}
A joint sample/target distribution over a universe of $n$ elements is specified by a list $P$ of pairs $(A_i,B_i)$ of some length $m$, where for each $i\in\{1,\ldots,m\}$ the sets $A_i,B_i$ are subsets of $\{1,\ldots,n\}$.
To sample from this distribution, choose a uniformly random $i\sim\{1,\ldots,m\}$ and let $A_i$ be the sample indices, and $B_i$ be the target indices.
For values $\{x_1,\ldots,x_n\}$, the sample values will be $x_{A_i}$ and the target values will be $x_{B_i}$.
\end{definition}

Given a sample set $A$ and target set $B$ from such a distribution $P$, for values $x_1,\ldots,x_n$ the algorithmic challenge is to predict a desired attribute of the target values $x_B$, using knowledge of the sample values $x_A$ along with the indices $A$ and $B$.
We first focus on the case where the values are real numbers and the goal is to compute the arithmetic mean of the target values. Our objective is to minimize the root-mean-squared error of the estimate of the mean, relative to the scale of the data, $\max_i |x_i|$. Since semilinear estimators scale linearly with the data, this is equivalent to normalizing the data by dividing through by $max_i |x_i|$, and then minimizing the mean squared error subject to the data bound $|x_i|\leq 1$, which is the approach we adopt throughout. %####

%We focus on the case where the values are real numbers of magnitude at most $1$, and the goal is to compute the arithmetic mean of the target values. The bound $|x_i|\leq 1$ is to make ``worst case data" well defined; the bound 1 may be trivially changed by rescaling the data. We are essentially considering the worst case among data vectors $x$ of bounded infinity (max) norm; we expect that bounding the $L_2$ or $L_1$ norm of $x$ instead would give comparable results. %###

\begin{definition}
For an estimation algorithm $f(x_A,A,B)$ taking as inputs the sample values along with the indices of the sample and target sets, we define the worst-case expected performance for distribution $P$, to be the maximum over values $x_1,\ldots,x_n$ of the mean squared error of its estimate:
%\[\frac{1}{m}\sum_{i=1}^m \left(f(x_{A_i},A_i,B_i)-mean(x_{B_i})\right)^2.\]
%When the values $x_1,\ldots,x_n$ are not specified, we characterize an algorithm $f$ by its worst-case performance
\[\max_{x_1,\ldots,x_n\in[-1,1]}\frac{1}{m}\sum_{i=1}^m \left(f(x_{A_i},A_i,B_i)-mean(x_{B_i})\right)^2.\]
\end{definition}

For \emph{semilinear} estimators the following notation will simplify the analysis.

\iffalse
\begin{definition}\label{def:linear-b}
Given a sample/target distribution $P=(A_1,B_1),\ldots,(A_m,B_m)$, define for each $B_i$ a corresponding vector $b_i\in\mathbb{R}^n$ where $b_i(j)=\frac{1}{|B_i|}$ for $j\in B_i$ and 0 otherwise. Thus for a vector $x\in\mathbb{R}^n$, we may represent the mean as a vector-vector product,  $mean(x_{B_i})=b_i^T x$.
\end{definition}
\fi

\begin{definition}\label{def:semilinear-notation}
Given a sample/target distribution $P=(A_1,B_1),\ldots,(A_m,B_m)$, a semilinear estimation algorithm consists of a vector $a_i\in\mathbb{R}^n$ for each $i\in\{1,\ldots,m\}$, where the support of $a_i$ is a subset of $A_i$. Thus the estimate $f(x_{A_i},A_i,B_i)$ is simply evaluated as the vector-vector product $a_i^T x$.

Correspondingly, we reexpress $mean(x_{B_i})=b_i^T x$ by defining for each $B_i$ a corresponding vector $b_i\in\mathbb{R}^n$, where $b_i(j)=\frac{1}{|B_i|}$ for $j\in B_i$ and 0 otherwise. Thus the performance of a semilinear estimation algorithm $(a_i)$ equals
\begin{equation}
    \label{eq:fixed-linear}
    \max_{x_1,\ldots,x_n\in[-1,1]}\frac{1}{m}\sum_{i=1}^m \left((a_i-b_i)^T x\right)^2=\max_{x_1,\ldots,x_n\in\{-1,1\}} x^T\left(\frac{1}{m}\sum_{i=1}^m (a_i-b_i)^T (a_i-b_i)\right) x.
\end{equation}
\end{definition}

In the second expression above, we (equivalently) restrict the range of each $x_j$ to the endpoints $\{-1,1\}$ since the expression being maximized is a positive semidefinite quadratic form of $x$, and thus each $x_j$ may be moved to one of the endpoints of its range without decreasing the objective function.

\begin{definition}
Given a sample/target distribution $P=(A_1,B_1),\ldots,(A_m,B_m)$, the performance of the best semilinear estimator is \[\frac{1}{m}\;\min_{a_i: \{j:a_i(j)\neq 0\}\subseteq A_i}\;\max_{x_1,\ldots,x_n\in\{-1,1\}}\sum_{i=1}^m \left((a_i-b_i)^T x\right)^2.\]
\end{definition}

\subsection{Worst-Case Performance of a Fixed Semilinear Estimator}\label{sec:fixed-linear}
Here we consider the challenge of optimizing Equation~\ref{eq:fixed-linear}: given a fixed semilinear estimator, how good is it?  %the algorithms leading to of Theorems~\ref{thm1} and~\ref{thm:full-information}.
As noted above, the matrix in the parentheses in the second expression of Equation~\ref{eq:fixed-linear} is positive semidefinite. In fact, for appropriate coefficients $a_i$, we can make the matrix $\frac{1}{m}\sum_{i=1}^m (a_i-b_i)^T (a_i-b_i)$ be an arbitrary positive semidefinite matrix (though possibly at the cost of an ``unnatural" estimation algorithm). Thus the problem of computing or estimating the performance of a given estimation algorithm is identical to what is known as the positive semidefinite Grothendieck problem.
%\begin{definition}
%The positive semidefinite Grothendieck problem, given an $n\times n$ positive semidefinite matrix $M$ is to evaluate: \begin{equation}\label{eq:grothendieck}\max_{x_1,\ldots,x_n\in\{-1,1\}} x^T M x\end{equation}
%(Note that this problem is sometimes phrased as the optimization over a \emph{pair} of vectors $x,y$, of the expression $x^T M y$, though for positive semidefinite $M$, an optimum will always be attained when $x=y$.)
%\end{definition}
Since the Grothendieck problem includes MAX-CUT, which is NP-hard, evaluating the performance of a fixed estimator is also NP-hard. Further, as was recently shown by Bri\"{e}t, Regev, and Saket, it is NP-hard even to approximate the semidefinite Grothendieck problem to within a factor of $\frac{\pi}{2}$~\cite{Khot:2009,Briet:2017}. Thus, even for a fixed semilinear estimator, we cannot hope to approximate its performance---given by Equation~\ref{eq:fixed-linear}---to better than a factor of $\frac{\pi}{2}$ in polynomial time.
%The positive semidefinite Grothendieck problem includes MAX-CUT as a special case, since, for an undirected graph $G$, its Laplacian $L$ is positive semidefinite, and for any vector $x\in\{-1,1\}^n$ that labels its vertices, the value of $x^T L x$ will equal the total degree of the graph plus the size of the cut induced by the labels of $x$. Thus, since MAX-CUT is NP-hard, evaluating the performance of a fixed estimator is also NP-hard. Further, H{\aa}stad showed that it is NP-hard to even approximate MAX-CUT to within a multiplicative factor of $\frac{17}{16}$~\cite{Hastad:2001}. For the more general case of the semidefinite Grothendieck problem considered here, Khot and Naor showed the unique-games hardness of approximating the optimum to within a factor of $\frac{\pi}{2}$; this result was recently strengthened by Bri\"{e}t, Regev, and Saket to show it is in fact NP-hard to get an approximation ratio better than $\frac{\pi}{2}$~\cite{Khot:2009,Briet:2017}. Thus, even for a fixed semilinear estimation algorithm, we cannot hope to approximate its performance---given by Equation~\ref{eq:fixed-linear}---to within a factor of $\frac{\pi}{2}$.

%\medskip\noindent{\bf Approximation algorithms:}
\vspace{-1mm}
However, analogously to the Goemans-Williamson semidefinite relaxation of MAX-CUT, we consider the semidefinite relaxation of the semidefinite Grothendieck problem, replacing each scalar variable $x_j$ with a vector $v_j$ in the $n$-dimensional unit ball.  The proof of the following proposition is given in a self-contained fashion in Appendix~\ref{ap:proofApprox}, and is based on the analysis of the randomized rounding scheme from Nesterov~\cite{Nesterov:1998}.  In this appendix we also provide some additional explanation and background on the Grothendieck problem and relaxation.

\begin{proposition}\label{prop:approx}
Given a sample/target distribution $P=(A_1,B_1),\ldots,(A_m,B_m)$, the problem of evaluating the performance $p$ of a semilinear estimator,  specified by vectors $a_1,\ldots,a_m\in\mathbb{R}^n$, is NP-hard to estimate to within a multiplicative factor of $\frac{\pi}{2}$. However, letting $M=\frac{1}{m}\sum_{i=1}^m (a_i-b_i) (a_i-b_i)^T$, the optimum of the convex (semidefinite) program \begin{equation}\label{eq:prop-approx}\max_{V\,\mathrm{ psd},\, V_{(j,j)}\leq 1} \sum_{j,k=1}^n M_{(j,k)} V_{(j,k)}\end{equation}
is in the interval $[p,\frac{\pi}{2}p]$, and can be found in polynomial time by semidefinite programming.
\end{proposition}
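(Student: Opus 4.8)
The plan is to dispatch the two claims separately, spending almost all effort on the SDP bound since the hardness is inherited. For the hardness, I would note, as the surrounding text observes, that for suitable choices of the sets and coefficients the matrix $M=\frac{1}{m}\sum_i (a_i-b_i)(a_i-b_i)^\top$ can be made to equal any prescribed positive semidefinite matrix: taking each target set $B_i$ empty gives $b_i=0$, and taking each $A_i=\{1,\ldots,n\}$ leaves $a_i$ unconstrained, so $\frac{1}{m}\sum_i a_ia_i^\top$ ranges over all PSD matrices. Hence evaluating $p=\max_{x\in\{-1,1\}^n} x^\top M x$ is exactly the positive semidefinite Grothendieck problem, and the hardness-of-approximation results of~\cite{Khot:2009,Briet:2017} immediately show it is NP-hard to estimate $p$ to within a multiplicative factor of $\frac{\pi}{2}$. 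The easy direction of the SDP bound is also immediate: any $x\in\{-1,1\}^n$ gives a feasible point $V=xx^\top$ of~\eqref{eq:prop-approx}, PSD with $V_{(j,j)}=x_j^2=1$, whose objective equals $x^\top M x$, so the SDP optimum is at least $p$.

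The work is the reverse inequality, for which I would use Nesterov's Gaussian rounding. First I would normalize: raising each diagonal entry of a feasible $V$ to $1$ preserves positive semidefiniteness (it adds a nonnegative multiple of $e_je_j^\top$) and can only increase the objective since $M_{(j,j)}\ge 0$ for PSD $M$; so I may assume the optimal $V$ has unit diagonal, whence $V_{(j,k)}=v_j^\top v_k$ with $\|v_j\|=1$. Then I draw a standard Gaussian $g$ and set $x_j=\mathrm{sign}(\langle g,v_j\rangle)$. The rounded vector lies in $\{-1,1\}^n$, and the classical arcsine law $\E[\mathrm{sign}(\langle g,v_j\rangle)\mathrm{sign}(\langle g,v_k\rangle)]=\frac{2}{\pi}\arcsin(v_j^\top v_k)$ yields $\E[x^\top M x]=\frac{2}{\pi}\sum_{j,k}M_{(j,k)}\arcsin(V_{(j,k)})$.

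The crux is to show this expectation dominates $\frac{2}{\pi}\langle M,V\rangle$, i.e.\ that $\langle M,\arcsin(V)-V\rangle\ge 0$, with $\arcsin$ applied entrywise. I would prove $\arcsin(V)-V$ is itself positive semidefinite. The Taylor series $\arcsin(t)=\sum_{\ell\ge 0}c_\ell\,t^{2\ell+1}$ has all coefficients $c_\ell\ge 0$, so entrywise $\arcsin(V)-V=\sum_{\ell\ge 1}c_\ell\,V^{\circ(2\ell+1)}$, a nonnegative combination of odd Hadamard (entrywise) powers of $V$. Since $V$ is PSD with $|V_{(j,k)}|\le \sqrt{V_{(j,j)}V_{(k,k)}}\le 1$ (so the series converges), the Schur product theorem makes each $V^{\circ(2\ell+1)}$ PSD, hence so is the sum. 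As the Frobenius inner product of two PSD matrices is nonnegative, $\langle M,\arcsin(V)-V\rangle\ge 0$. Combining, $\frac{2}{\pi}\langle M,V\rangle\le \E[x^\top M x]\le \max_{x\in\{-1,1\}^n} x^\top M x=p$, which rearranges to $\mathrm{SDP}\le\frac{\pi}{2}p$, giving the interval $[p,\frac{\pi}{2}p]$. Polynomial-time computability follows because~\eqref{eq:prop-approx} is a semidefinite program over $n\times n$ matrices, solvable to arbitrary accuracy by standard interior-point or ellipsoid methods. The main obstacle is this last positivity argument: everything hinges on the sign pattern of the arcsine coefficients combined with the Schur product theorem, and it is precisely the positive semidefiniteness of $M$—here guaranteed, as $M$ is a sum of outer products—that converts the entrywise matrix inequality $\arcsin(V)\succeq V$ into the scalar bound on the objective.
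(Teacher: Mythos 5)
Your proposal is correct and follows essentially the same route as the paper: the hardness is inherited from the positive semidefinite Grothendieck problem via the observation that $M$ can be an arbitrary PSD matrix, and the upper bound is Nesterov's sign-rounding analysis, where your entrywise inequality $\arcsin(V)\succeq V$ (nonnegative Taylor coefficients plus the Schur product theorem) is exactly the paper's argument phrased with $\arccos(y)=\frac{\pi}{2}-\arcsin(y)$ and nonpositive higher-order coefficients. The only differences are cosmetic (arcsin versus arccos bookkeeping, and your slightly more explicit reduction and diagonal-normalization step).
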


\subsection{Computing a Near-Optimal Semilinear Estimator}
While Section~\ref{sec:fixed-linear} analyzed the problem of evaluating the performance of a fixed semilinear estimator, here instead we aim to find a near-optimal semilinear estimator. This is a challenging setting for optimization, as even evaluating the objective function, to within a factor of $\frac{\pi}{2}$, is NP-hard (as discussed in Section~\ref{sec:fixed-linear}). However, as we will see, the convex (semidefinite) relaxation derived in Section~\ref{sec:fixed-linear} not only lets us approximate the performance of a fixed estimator to a $\frac{\pi}{2}$ factor, but also provides the crucial structure enabling us to find a semilinear estimator whose performance is within a $\frac{\pi}{2}$ factor of the best possible semilinear estimator.

\begin{theorem}\label{thm:full-information}
Algorithm~\ref{alg:linear-full}, given a description of the joint distribution of sample and target sets $(A_1,b_1),\ldots,(A_m,b_m)$, runs in polynomial time, and returns coefficients for a semilinear estimator whose expected squared error is within a $\frac{\pi}{2}$ factor of that of the best semilinear estimator. The value of the objective function achieved by $\widehat{V}$ is $m$ times the Proposition~\ref{prop:approx} SDP bound on the mean squared error of the best semilinear estimator.
\end{theorem}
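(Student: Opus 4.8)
The plan is to recognize that, after the Grothendieck reformulation of Section~\ref{sec:fixed-linear}, minimizing the worst-case error of a semilinear estimator becomes a nested optimization in which the estimator coefficients $a_i$ and the Grothendieck SDP certificate $V$ can be optimized \emph{jointly} inside a single convex program. Write $M(a)=\frac1m\sum_{i=1}^m (a_i-b_i)(a_i-b_i)^T$, let $p(a)=\max_{x\in\{-1,1\}^n}x^T M(a)x$ be the true (worst-case expected) performance of the estimator $(a_i)$, and let $\mathrm{SDP}(a)=\max_{V\succeq 0,\,V_{(j,j)}\le 1}\sum_{j,k}M(a)_{(j,k)}V_{(j,k)}$ denote the optimum of the relaxation~\eqref{eq:prop-approx} with $M=M(a)$. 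Proposition~\ref{prop:approx} gives the two-sided bound $p(a)\le \mathrm{SDP}(a)\le \frac\pi2\, p(a)$ for every fixed $a$. The algorithm does not attempt to minimize the intractable $p(a)$ directly; instead it minimizes the relaxed, tractable objective $\mathrm{SDP}(a)$ over all valid coefficient vectors, returning the minimizer $\hat a$ together with the optimal certificate $\widehat V$.

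The first step is to verify that $\min_a \mathrm{SDP}(a)$ is a polynomially-solvable convex problem. For fixed $V\succeq 0$, the integrand $\sum_{j,k}M(a)_{(j,k)}V_{(j,k)}=\frac1m\sum_i (a_i-b_i)^T V (a_i-b_i)$ is a sum of \emph{convex} quadratics in the $a_i$ (convex precisely because $V\succeq 0$), so $\mathrm{SDP}(a)$, a supremum over $V$ of convex functions of $a$, is convex in $a$; the constraints $\{j:a_i(j)\ne 0\}\subseteq A_i$ are linear. Hence $\min_a\max_V \sum_{j,k}M(a)_{(j,k)}V_{(j,k)}$ is a convex–concave saddle-point problem on convex domains. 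The objective is coercive in $a$ (the feasible choice $V=I$ lower-bounds it by $\frac1m\sum_i\|a_i-b_i\|^2$), so a norm bound on $a$ may be added harmlessly and the minimum is attained. I would solve it as one semidefinite program: by minimax duality it equals $\max_V\min_a(\cdots)$, where the inner minimization decouples across $i$ and each $\min_{a_i}(a_i-b_i)^T V(a_i-b_i)$ has a closed-form Schur-complement value that is concave in $V$, leaving an outer maximization over $\{V\succeq0,\,V_{(j,j)}\le1\}$ of size $\mathrm{poly}(n,m)$. The saddle point yields both $\widehat V$ and the corresponding $\hat a$.

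Correctness then follows from a three-term sandwich. Let $a^{*}=\arg\min_a p(a)$ be the true best semilinear estimator, and recall $\hat a=\arg\min_a \mathrm{SDP}(a)$. Then
\begin{equation*}
p(\hat a)\;\le\;\mathrm{SDP}(\hat a)\;\le\;\mathrm{SDP}(a^{*})\;\le\;\tfrac{\pi}{2}\,p(a^{*}).
\end{equation*}
The first inequality is the relaxation (lower) bound of Proposition~\ref{prop:approx} applied at $\hat a$; the middle inequality holds because $\hat a$ minimizes $\mathrm{SDP}(\cdot)$ while $a^{*}$ is merely feasible; and the last is the rounding (upper) bound of Proposition~\ref{prop:approx} applied at $a^{*}$. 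Since $p(a^{*})$ is exactly the performance of the best semilinear estimator, this yields $p(\hat a)\le\frac\pi2\,p(a^{*})$, as claimed. For the final sentence of the theorem, the algorithm's objective is the \emph{unnormalized} sum $\sum_i\big((a_i-b_i)^Tx\big)^2$ (the left-hand side of~\eqref{eq:fixed-linear}, i.e. $m\,M(a)$ in place of $M(a)$), so $\widehat V$ attains $m\cdot\mathrm{SDP}(\hat a)$, which is exactly $m$ times the Proposition~\ref{prop:approx} bound on the mean squared error of the optimal coefficients.

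I expect the main obstacle to lie in the second paragraph: confirming that the joint problem is a genuine polynomial-size SDP solvable to the required accuracy, in particular carrying out the inner minimization over $a$ cleanly (the Schur-complement/projection computation, with attention to rank-deficient $V$) and justifying the minimax swap via compactness and coercivity. Once the joint convex program is established, the $\frac{\pi}{2}$ guarantee is immediate, being just two invocations of Proposition~\ref{prop:approx} sandwiching the optimality of $\hat a$.
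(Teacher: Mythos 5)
Your proposal is correct and follows essentially the same route as the paper's proof: minimize the Proposition~\ref{prop:approx} relaxation over the coefficients, swap min and max via Sion's minimax theorem (the paper uses boundedness of the $V$-domain, so your coercivity argument for $a$ is not even needed), solve the decoupled inner minimization in closed form to obtain the concave objective of Equation~\ref{eq:alg-formula}, and conclude via the two-sided sandwich $p(\hat a)\le\mathrm{SDP}(\hat a)\le\mathrm{SDP}(a^{*})\le\frac{\pi}{2}p(a^{*})$. The paper states this sandwich only implicitly ("this minimum will be within a factor of $\frac{\pi}{2}$ of the performance of the best semilinear estimator"), and it handles your rank-deficiency concern by replacing $V_{(A_i,A_i)}^{-1}$ with the pseudoinverse.
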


\begin{algorithm}[H]
\caption{SDP Algorithm yielding $\frac{\pi}{2}$-approximation to the best semilinear estimator}
\label{alg:linear-full}

{\bf Input:}
A joint distribution $P$ of sample and target sets, expressed as a list of pairs $(A_1,b_1),\ldots,(A_m,b_m)$, where each $A_i\subset\{1,\ldots,n\}$ is the indices of the sample set in the $i^\textrm{th}$ pair, and each $b_i$ is a vector with uniform values over the target set in the $i^\textrm{th}$ pair, as in Definition~\ref{def:semilinear-notation}.
\medskip

For an $n\times n$ matrix $V$ and a set $A_i\in\{1,\ldots,n\}$, let $V_{A_i}$ denote $V$ restricted to the rows in $A_i$, and let $V_{A_i,A_i}$ denote $V$ restricted to \emph{both} rows and columns in $A_i$.

\begin{enumerate}
    \item Compute the concave maximization \begin{equation}\label{eq:alg-formula}\widehat{V}=\argmax_{V\textrm{ psd}, V_{j,j}\leq 1}\;\sum_{i=1}^m b_i^T (V-V_{(A_i)}^T V_{(A_i,A_i)}^{-1}V_{(A_i)} )b_i\end{equation}
    \item For each $i\in\{1,\ldots,n\}$, let $a_i$ when restricted to the coordinates $A_i$ equal $\widehat{V}_{(A_i,A_i)}^{-1}\widehat{V}_{(A_i)} b_i$, and 0 on the remaining coordinates. {\bf Output} $a_1,\ldots,a_m$.
\end{enumerate}
\end{algorithm}
The function inside the sum in Step 1 can be reexpressed as a standard semidefinite program, using the Schur complement to reexpress the matrix inverse---which is done automatically, for example, in the CVXPY package, as used in our code for the experiments in Section~\ref{sec:experiments}. The inverse in Step 2 to compute the linear coefficients can be interpreted as a pseudoinverse $\widehat{V}_{(A_i,A_i)}^{+}$ in cases where it would otherwise be singular.

The proof of Theorem~\ref{thm:full-information} is given in Appendix~\ref{ap:proofFullInfo}.  While Algorithm~\ref{alg:linear-full} takes as input the entire description %$(A_1,B_1),\ldots,(A_m,B_m)$
of the joint sample/target distribution, such a description might be (1) unavailable in practice and/or (2) have support size $m$ that is exponentially large. In Appendix~\ref{sec:efficient} we modify this algorithm to use only a polynomial number of \emph{samples} from distribution $P$, addressing both issues mentioned above, establishing Theorem~\ref{thm1}.

\section{Empirical Evaluation}\label{sec:experiments}
\iffalse
    \caption{A comparison of Algorithm~\ref{alg:linear-full} (labeled RCWC) and baseline approaches in the snowball sampling and selective prediction settings. In each case, prediction error is shown on a log scale and refers to expected squared error.
    (a) Our algorithm outperforms the baseline estimator that outputs the sample mean by a constant factor when evaluated on the spatially-correlated values (Figure~\ref{fig:sampling-procedure}).
    (b) Our algorithm outperforms the baseline estimator by a constant factor when evaluated (approximately) on worst-case values.
    (c) Our algorithm outperforms the estimator from~\cite{drucker2013high, qiao2019theory} by a constant factor when evaluated (approximately) on worst-case values.
    }
    \caption{Visualizations of the sampling procedure and spatially-correlated values corresponding to our experimental setup in the snowball sampling setting.}
\fi

% \begin{figure}
%     \centering
%     \includegraphics[width=\linewidth]{}
%     \caption{Importance Sampling}
%     \label{fig:importance-bar}
% \end{figure}

We illustrate the empirical performance of Algorithm~\ref{alg:linear-full} for estimating the target mean in three natural settings in which membership in the sample may be perniciously correlated with the underlying data values.
The SDP formulation for estimating the worst-case expected squared error of a fixed semilinear estimator in Proposition~\ref{prop:approx} allows us to compare our algorithm with natural baseline estimators.
%: A  snowball/chain sampling setting, and the ``selective prediction'' setting from~\cite{drucker2013high,qiao2019theory} described in Example~\ref{ex:selective}.
These experiments are based on our implementation of Algorithm~\ref{alg:linear-full} using the Python CVXPY package~\cite{diamond2016cvxpy, agrawal2018cvxpyrewriting} with the MOSEK solver~\cite{mosek}---our code is available at \url{https://github.com/justc2/worst-case-randomly-collected}.

\paragraph{Importance Sampling:}

We consider a set of $n=50$ elements where the $i$th element is included in the sample set independently with probability $p_i$, with $p_1,\ldots,p_{25}=0.1$ and $p_{26},...,p_{50}=0.5$. The target set is the entire population, i.e. the goal is to estimate the population mean.
Table~\ref{tab1} compares the performance of the estimator given by our framework with two standard baselines: \emph{inverse proportional reweighting} and \emph{subgroup estimation}. For inverse proportional reweighting, the estimate is given by the weighted mean of the values in the sample set where the weight for value $x_i$ is $1/p_i$. For subgroup estimation, the estimator separately computes the sample mean for $i \le 25$ and $i > 25$ and then returns the average of these two values.
We empirically evaluate the above estimators in terms of expected squared error on the following data values, which are designed to illustrate the strengths and weaknesses of the above estimators. For \emph{constant} values, all values are $1$; for \emph{intergroup variance}, $x_1,\ldots,x_{25} = 1$ and $x_{26},\ldots,x_{50}=-1$; for \emph{intragroup variance}, $x_{even}=1$ and $x_{odd}=-1$; and for \emph{worst-case}, the approximate worst case error is given by solving the SDP relaxation described in Section~\ref{sec:fixed-linear} (note that in this setting the approximation error is negligible).

Even in this simple setting with independent sampling, our algorithm (``SDP Alg'') gives roughly a two-fold improvement over the baselines in terms of worst-case expected error. While both inverse proportional reweighting and subgroup estimation are unbiased, they have high variance in some of the settings. As inverse proportional reweighting is non-adaptive to the sample size (it always assigns the same weights to the elements), the size of the sample set has a large impact on bias of the estimate. Subgroup estimation is adaptive, but if we only get a few samples from one of the groups, it assigns high weights to those samples, which contribute a high variance. SDP Alg is adaptive to the sampling process, keeping the error small in all cases, by effectively regularizing its estimate to avoid over-reliance on any small number of elements.

\paragraph{Snowball/Chain Sampling:} In this experiment,  an underlying set of $n=50$ points is drawn uniformly from the 2D unit square.  A sample is generated by first including a randomly selected point; then, iteratively, each point in the sample ``recruits'' two of its five closest neighbors, % to join the sample,
until the desired sample size is reached.  The upper left pane of the figure depicts this process. The target set is the entire population of $n$ individuals.   We compare the average squared error of our mean estimation algorithm (labeled SDP Alg) versus that of the naive estimator that returns the sample mean.  We consider two settings: 1) the true values are spatially-correlated and are given by the sum of the x and y coordinates of the point (bottom left pane), and 2) the worst-case values for this sampling distribution, approximated by the SDP relaxation from Section~\ref{sec:fixed-linear} (bottom middle pane).  In both cases, our algorithm outperforms the baseline estimator: by 2--4$\times$ for spatially-correlated values (even though our algorithm was not optimizing for this case!) and by 4--7$\times$ for worst-case values.

\paragraph{Selective Prediction:}
The selective prediction setting, described in Example~\ref{ex:selective} and considered in ~\cite{drucker2013high,qiao2019theory}, corresponds to chronological prediction problems in which samples from the past are used to make estimates about data in the future.
These previous works show that the estimator which, to predict the mean of a target set of size $w$ outputs the mean of the most ``recent'' $w$ points in the sample, achieves worst-case expected squared error $O(\frac{1}{\log n})$, where $n$ is the total number of elements, and that this is asymptotically optimal.
The bottom right pane of the figure compares that estimator with our algorithm, illustrating that both approaches scale with $O(\frac{1}{\log n})$ but that our algorithm yields a 2--3$\times$ reduction in error.

\renewcommand{\arraystretch}{1.2}
\begin{table}
    \centering
    % \captionsetup{justification=centering}
    \begin{tabular}{|c c c c|}
        \hline
        \textbf{Data Values} & \textbf{Inv. Prop. Reweighting} & \textbf{Subgroup Estimation} & \textbf{SDP Alg}  \\
        \hline
        Constant ($x_i=1$) & 0.100 & 0.018 & 0.051\\
        Intergroup Variance  & 0.100 & 0.018 & 0.053\\
        Intragroup Variance  & 0.100 & 0.121 & 0.052\\
        Worst-Case Bound (via SDP) & 0.101 & 0.122 & 0.053 \\
        \hline
    \end{tabular}
        \vphantom{1mm}
        \vspace{.2cm}
    \caption{Importance Sampling Experiment: Comparison of expected squared errors of our approach (SDP Alg) and two standard unbiased estimators, across three different assignments to underlying data, together with worst-case bounds given by our SDP. See text for details of the setting. \label{tab1}}
    \label{tab:importance}
\end{table}

\begin{figure}[t]
\captionsetup[subfigure]{justification=centering}
\centering
% \caption{Snowball Sampling and Selective Prediction}
\begin{subfigure}{.28\textwidth}
  \centering
  \includegraphics[width=\linewidth]{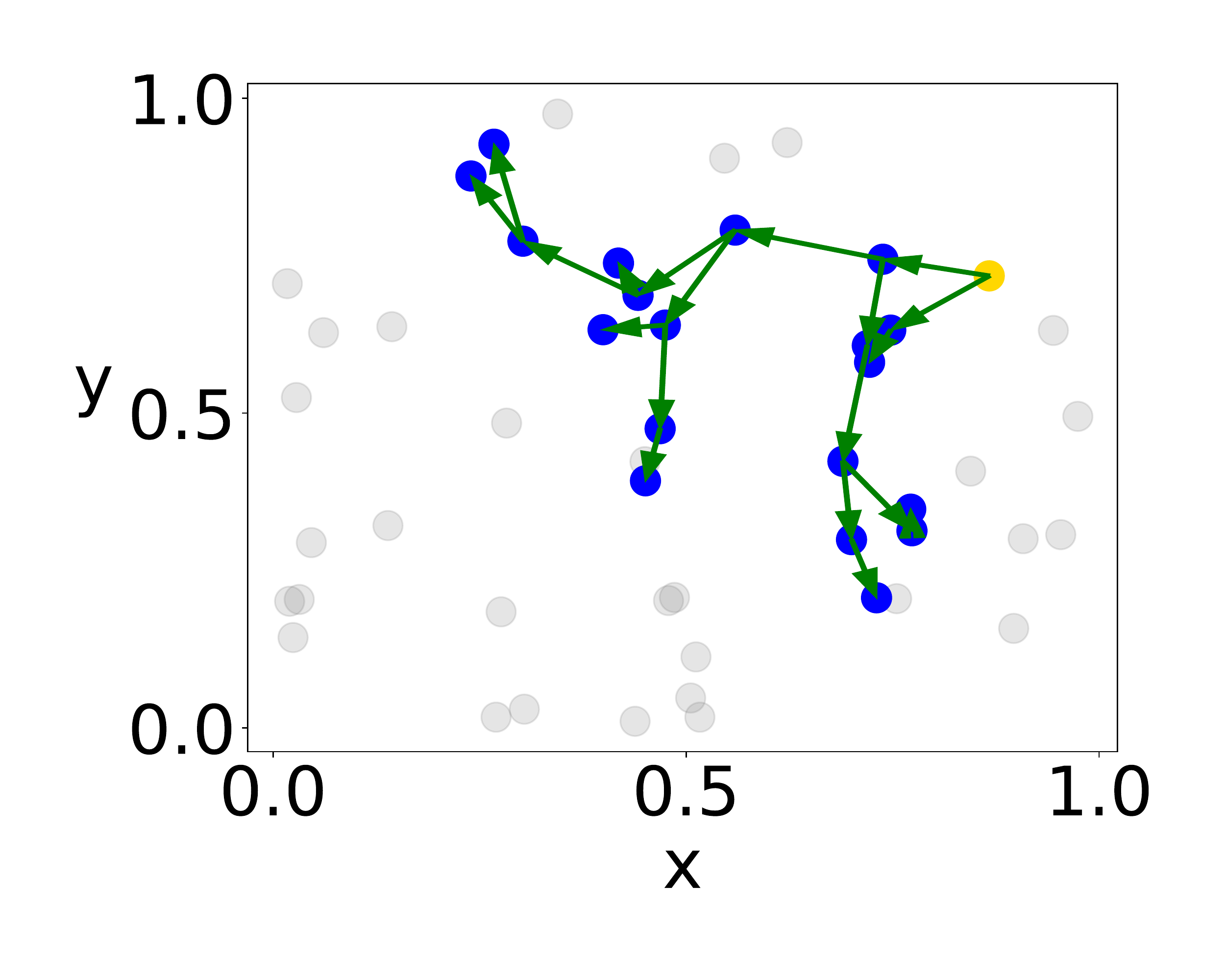}
  \caption{Example Snowball Sample\\ (Arrows indicate recruitment)}
  \label{fig:sampling-procedure}
\end{subfigure}\;\;%
\begin{subfigure}{.28\textwidth}
  \centering
  \includegraphics[width=\linewidth]{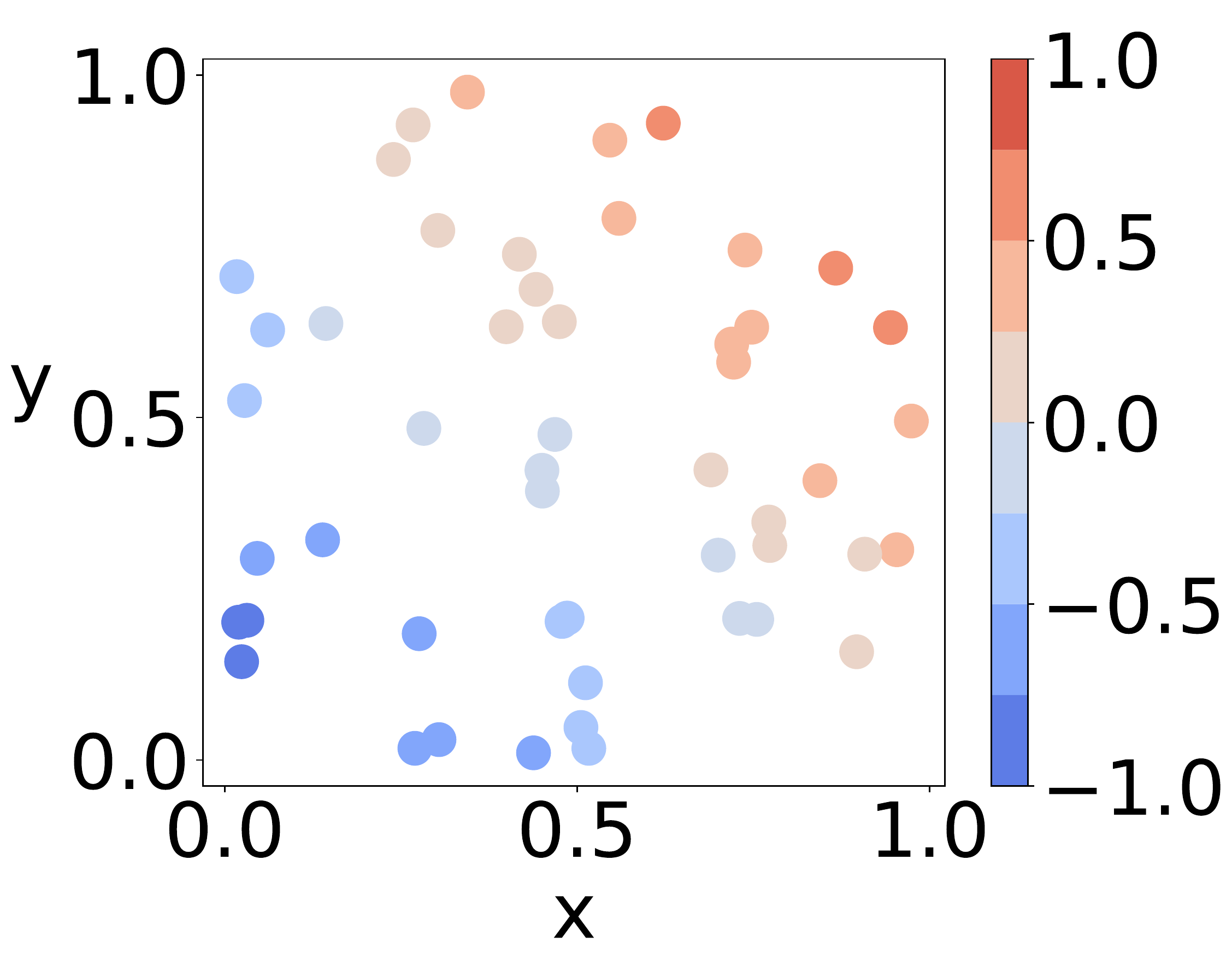}
  \caption{Spatially Correlated Values \\ \phantom{align}}
  \label{fig:spatial-values}
\end{subfigure}
\hfill
\begin{subfigure}{.28\textwidth}
  \centering
  \includegraphics[width=\linewidth]{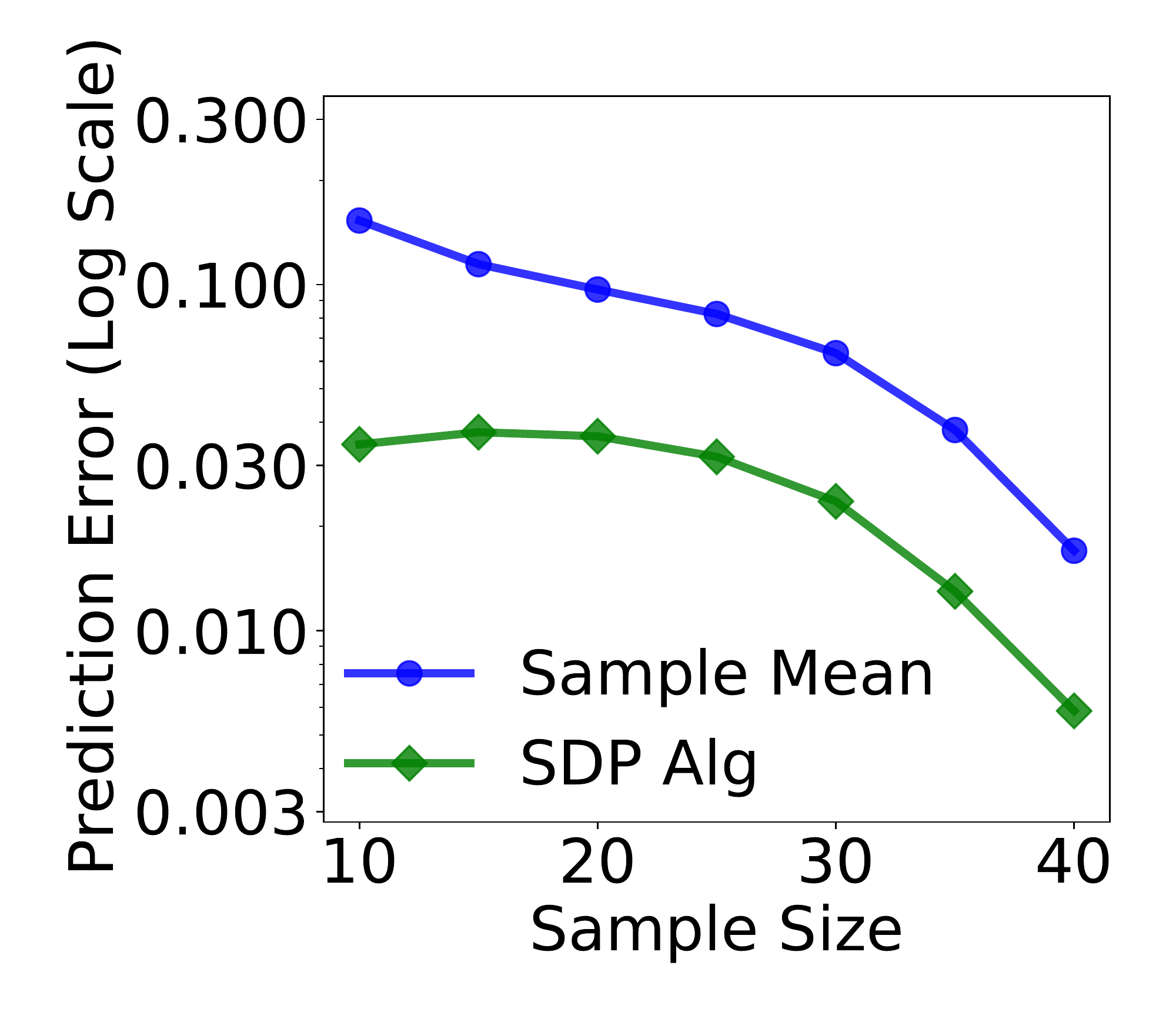}
  \caption{Snowball Sampling\\ (Spatially Correlated Values)}
  \label{fig:snowball-spatial}
\end{subfigure}\;\;%
\begin{subfigure}{.28\textwidth}
  \centering
  \includegraphics[width=\linewidth]{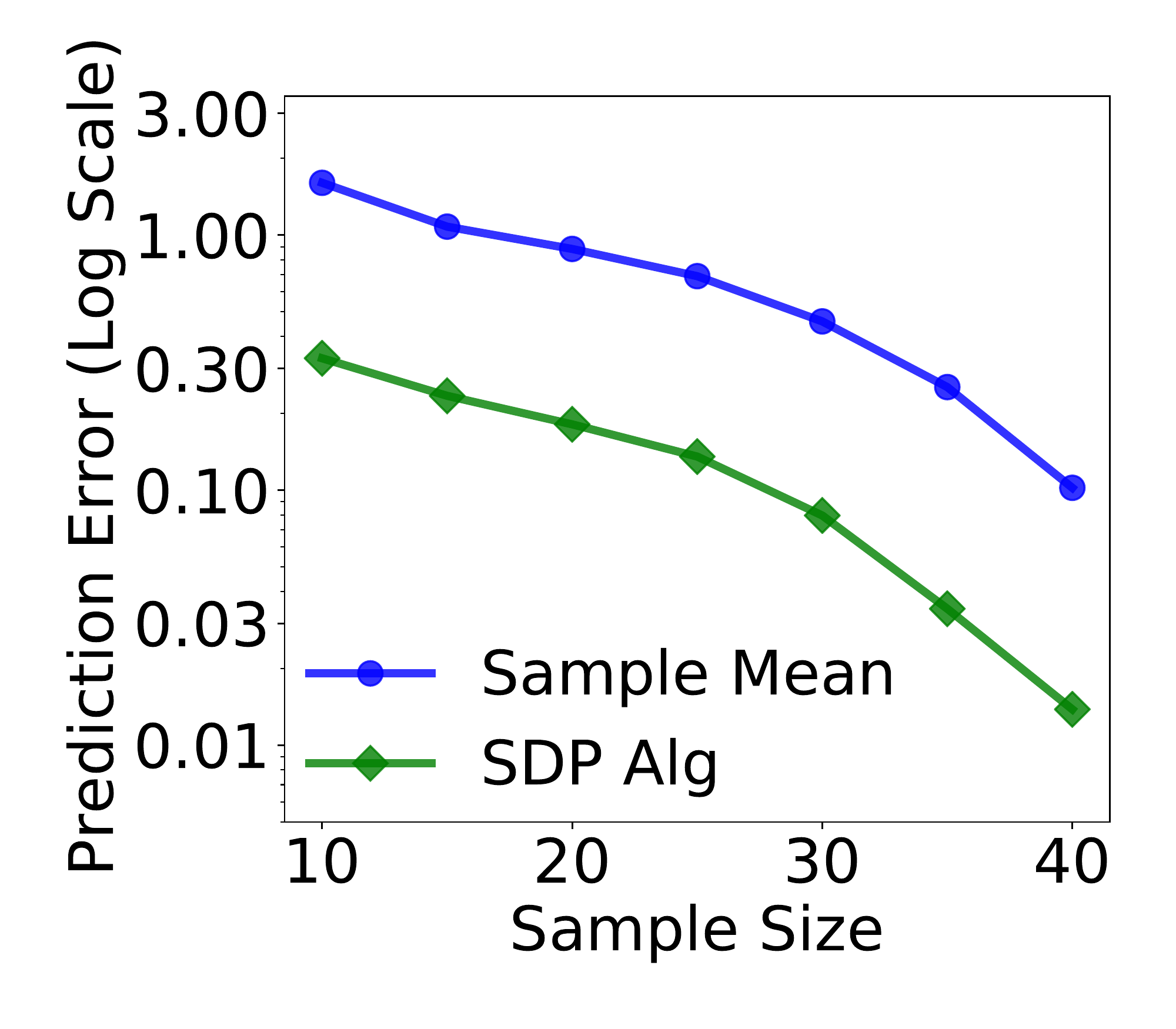}
  \caption{Snowball Sampling\\ (Worst-Case Bounds)}
  \label{fig:snowball-wc}
\end{subfigure}\;\;%
\begin{subfigure}{.28\textwidth}
  \centering
  \includegraphics[width=\linewidth]{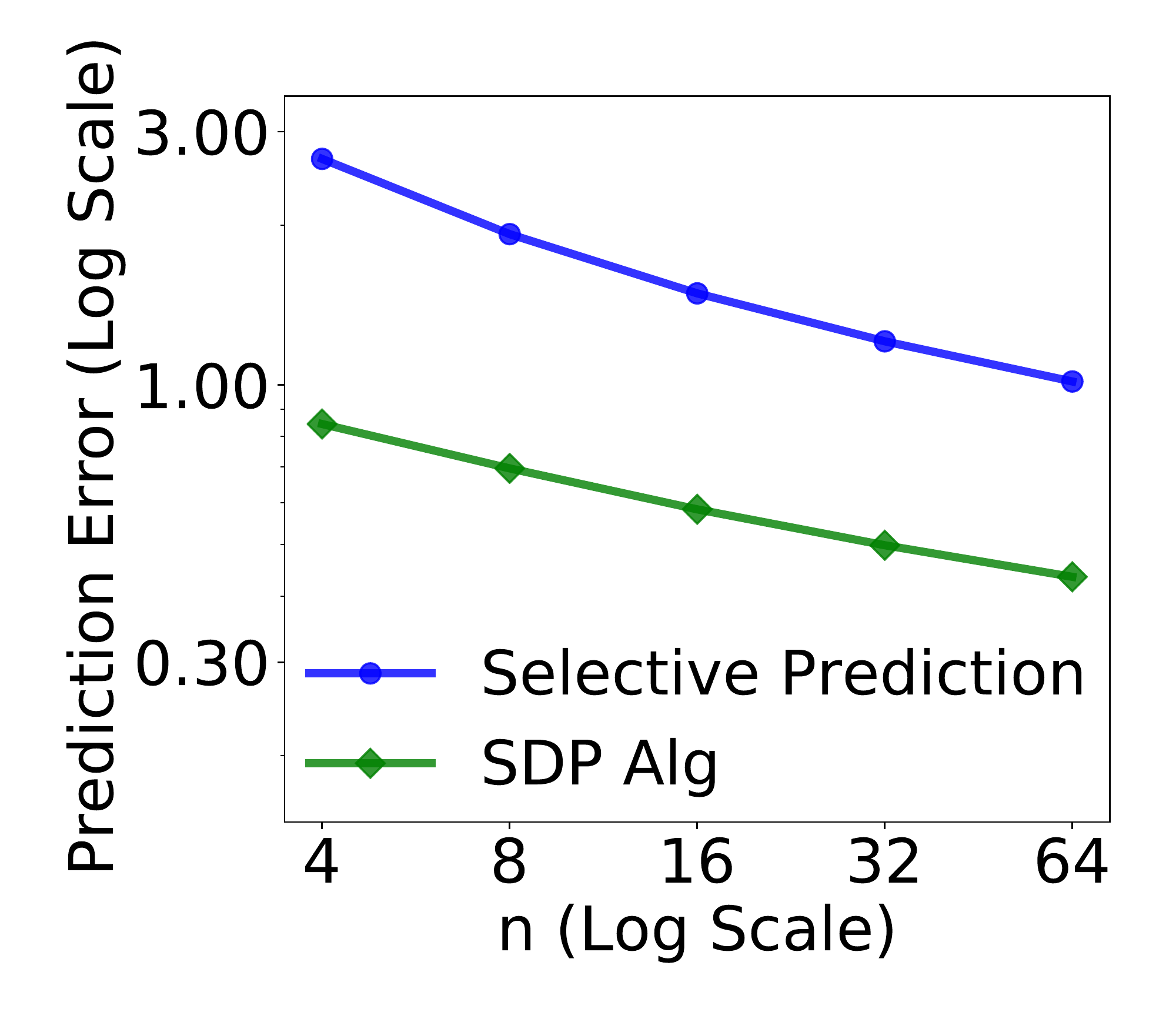}
  \caption{Selective Prediction\\ (Worst-Case Bounds)}
  \label{fig:selective-prediction}
\end{subfigure}
\label{fig:plots}
\end{figure}

\newpage
\section*{Broader Impact}
The question of how to extract accurate statistics based on nonuniform/biased samples is of utmost societal importance.  And this basic question is still far from solved---one need only look to the consistent errors across political polls, or more recent discussion on estimating the rate of COVID exposure based on different strategies for recruiting participants and then ``correcting'' for these biased samples.  The vast majority of work on accurate estimation is based on strong distributional assumptions on the data values.  The risk is that when these assumptions do not hold, the estimates and their confidence bounds, are meaningless.  In this work, we introduce a very general framework that allows one to ask (and answer) the question of whether a given data collection procedure can admit an estimation algorithm which will be accurate, even for worst-case data values.  We hope that this framework, which we refer to as \emph{worst case analysis for randomly collected data}, will offer better estimators in some settings, and offer new perspectives on collecting and inferring information from data samples.
\iffalse

this paper, we introduce a perspective tha

an array of teams that each use highly nrecruitment strategies

accurately
Authors are required to include a statement of the broader impact of their work, including its ethical aspects and future societal consequences.
Authors should discuss both positive and negative outcomes, if any. For instance, authors should discuss a)
who may benefit from this research, b) who may be put at disadvantage from this research, c) what are the consequences of failure of the system, and d) whether the task/method leverages
biases in the data. If authors believe this is not applicable to them, authors can simply state this.

Use unnumbered first level headings for this section, which should go at the end of the paper. {\bf Note that this section does not count towards the eight pages of content that are allowed.}
\fi

\begin{ack}
Justin Chen is partially supported by the NSF Graduate Research Fellowship under Grant No. 1745302, the NSF CCF-2006806 award, and the Simons Investigator Award.
Gregory Valiant is partially supported by NSF awards AF-1813049, CCF-1704417, and  1804222, ONR
Young Investigator Award N00014-18-1-2295, and DOE Award DE-SC0019205. Paul Valiant is partially supported by NSF awards IIS-1562657, DMS-1926686, and indirectly supported by NSF award CCF-1900460.
\end{ack}

% \section*{References}
% References follow the acknowledgments. Use unnumbered first-level heading for
% the references. Any choice of citation style is acceptable as long as you are
% consistent. It is permissible to reduce the font size to \verb+small+ (9 point)
% when listing the references.
% {\bf Note that the Reference section does not count towards the eight pages of content that are allowed.}
% \medskip

% \small

% [1] Alexander, J.A.\ \& Mozer, M.C.\ (1995) Template-based algorithms for
% connectionist rule extraction. In G.\ Tesauro, D.S.\ Touretzky and T.K.\ Leen
% (eds.), {\it Advances in Neural Information Processing Systems 7},
% pp.\ 609--616. Cambridge, MA: MIT Press.

% [2] Bower, J.M.\ \& Beeman, D.\ (1995) {\it The Book of GENESIS: Exploring
%   Realistic Neural Models with the GEneral NEural SImulation System.}  New York:
% TELOS/Springer--Verlag.

% [3] Hasselmo, M.E., Schnell, E.\ \& Barkai, E.\ (1995) Dynamics of learning and
% recall at excitatory recurrent synapses and cholinergic modulation in rat
% hippocampal region CA3. {\it Journal of Neuroscience} {\bf 15}(7):5249-5262.

\bibliographystyle{plain}
\bibliography{bib}
\newpage
\appendix

\section{Proof of Proposition~\ref{prop:approx}}\label{ap:proofApprox}
In this section we give a self-contained proof of Proposition~\ref{prop:approx}, restated here for convenience:

\paragraph{Proposition~\ref{prop:approx}}\emph{
Given a sample/target distribution $P=(A_1,B_1),\ldots,(A_m,B_m)$, the problem of evaluating the performance $p$ of a semilinear estimation algorithm specified by vectors $a_1,\ldots,a_m\in\mathbb{R}^n$ is NP-hard to estimate to within a multiplicative factor of $\frac{\pi}{2}$. However, letting $M=\frac{1}{m}\sum_{i=1}^m (a_i-b_i)^T (a_i-b_i)$, the optimum of the convex (semidefinite) program \begin{equation}\max_{V\,\mathrm{ psd},\, V_{(j,j)}\leq 1} \sum_{j,k=1}^n M_{(j,k)} V_{(j,k)}\nonumber \end{equation}
is in the interval $[p,\frac{\pi}{2}p]$, and can be found in polynomial time by standard semidefinite programming algorithms.}

The proof of this proposition leverages the connection to the positive semidefinite Grothendieck problem:

\begin{definition}
The positive semidefinite Grothendieck problem, given an $n\times n$ positive semidefinite matrix $M$ is to evaluate: \begin{equation}\label{eq:grothendieck}\max_{x_1,\ldots,x_n\in\{-1,1\}} x^T M x\end{equation}
(Note that this problem is sometimes phrased as the optimization over a \emph{pair} of vectors $x,y$, of the expression $x^T M y$, though for positive semidefinite $M$, an optimum will always be attained when $x=y$.)
\end{definition}

The positive semidefinite Grothendieck problem includes MAX-CUT as a special case, since, for an undirected graph $G$, its Laplacian $L$ is positive semidefinite, and for any vector $x\in\{-1,1\}^n$ that labels its vertices, the value of $x^T L x$ will equal the total degree of the graph plus the size of the cut induced by the labels of $x$. Thus, since MAX-CUT is NP-hard, evaluating the performance of a fixed estimator is also NP-hard. Further, H{\aa}stad showed that it is NP-hard to even approximate MAX-CUT to within a multiplicative factor of $\frac{17}{16}$~\cite{Hastad:2001}. For the more general case of the semidefinite Grothendieck problem considered here, Khot and Naor showed the unique-games hardness of approximating the optimum to within a factor of $\frac{\pi}{2}$; this result was recently strengthened by Bri\"{e}t, Regev, and Saket to show it is in fact NP-hard to get an approximation ratio better than $\frac{\pi}{2}$~\cite{Khot:2009,Briet:2017}. Thus, even for a fixed semilinear estimation algorithm, we cannot hope to approximate its performance---given by Equation~\ref{eq:fixed-linear}---to within a factor of $\frac{\pi}{2}$.

Analogously to the Goemans-Williamson semidefinite relaxation of MAX-CUT, we consider the semidefinite relaxation of the semidefinite Grothendieck problem, replacing each scalar variable $x_j$ with a vector $v_j$ in the $n$-dimensional unit ball.

\begin{definition}\label{def:Grothendieck-relaxed}
Given an $n\times n$ positive semidefinite matrix $M$, the semidefinite relaxation of the positive semidefinite Grothendieck problem is to evaluate: \begin{equation}\label{eq:grotendieck-relaxed1}\max_{v_j\in \mathbb{R}^n:||v_j||\leq 1} \sum_{j,k=1}^n M_{(j,k)} (v_j^T v_k)\end{equation}
or, equivalently, letting ``psd" denote the property of a matrix being positive semidefinite, \[\max_{V\,\mathrm{ psd},\, V_{(j,j)}\leq 1} \sum_{j,k=1}^n M_{(j,k)} V_{(j,k)}\]
\end{definition}

Crucially, the set of positive semidefinite matrices is convex, so thus the optimization problem of Definition~\ref{def:Grothendieck-relaxed} (in its second form) maximizes a linear function over a convex set, and thus can be computed in polynomial time.

Goemans and Williamson famously showed, via a randomized rounding scheme, that the gap between MAX-CUT and the result of the induced positive semidefinite relaxation is bounded by a factor of 1.14~\cite{Goemans:1995}. For the more general setting here, of arbitrary positive semidefinite matrices instead of graph Laplacians, Nesterov showed a bound of $\frac{\pi}{2}$~\cite{Nesterov:1998}. We include a self-contained derivation here, for the sake of completeness.

Since scaling a single vector $v_j$ affects Equation~\ref{eq:grotendieck-relaxed1} in a convex quadratic manner, there will always be an optimum of Equation~\ref{eq:grotendieck-relaxed1} where $||v_j||=1$ for all $j$. We assume this, for simplicity, when describing the randomized rounding procedure below.

\begin{definition}
Given $n$ unit vectors $v_j\in\mathbb{R}^n$, for $j\in \{1,\ldots,n\}$, the Goemans-Williamson randomized rounding procedure chooses a random direction $r$, and for each vector $v_j$ returns a scalar $x_j=\mathrm{sign}(r^T v_j)$.
\end{definition}

\begin{proposition}\label{prop:nesterov}
Given an $n\times n$ positive semidefinite matrix $M$, and $n$ unit vectors $v_1,\ldots,v_n\in\mathbb{R}^n$, the value of the relaxed Grothendieck problem, $\sum_{j,k=1}^n M_{(j,k)} (v_j^T v_k)$ is at most $\frac{\pi}{2}$ times the expected value of the original Grothendieck problem evaluated on scalars $x_1,\ldots,x_n\in[-1,1]$ obtained from $v_1,\ldots,v_n$ by the Goemans-Williamson randomized rounding procedure, $\mathbb{E}[\sum_{j,k=1}^n M_{(j,k)} x_j x_k]$.
\end{proposition}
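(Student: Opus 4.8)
The plan is to exploit the exact characterization of the Goemans--Williamson rounding through the arcsine function, and then to reduce the desired inequality to a statement about positive semidefiniteness that follows from the Schur product theorem.

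First I would analyze a single pair $(j,k)$. Since $v_j,v_k$ are unit vectors, a uniformly random direction $r$ satisfies $\mathrm{sign}(r^T v_j)\neq\mathrm{sign}(r^T v_k)$ exactly when the hyperplane normal to $r$ separates $v_j$ and $v_k$, an event of probability $\theta_{jk}/\pi$, where $\theta_{jk}=\arccos(v_j^T v_k)$ is the angle between them. Hence $\E[x_j x_k]=1-2\theta_{jk}/\pi=\frac{2}{\pi}\arcsin(v_j^T v_k)$, using $\arccos(t)=\pi/2-\arcsin(t)$. Summing against $M$ gives
\[\E\left[\sum_{j,k=1}^n M_{(j,k)} x_j x_k\right]=\frac{2}{\pi}\sum_{j,k=1}^n M_{(j,k)}\arcsin(v_j^T v_k).\]
The claimed bound is therefore equivalent to showing that applying $\arcsin$ entrywise to the Gram matrix only increases the inner product against $M$, namely
\[\sum_{j,k=1}^n M_{(j,k)}\arcsin(v_j^T v_k)\;\geq\;\sum_{j,k=1}^n M_{(j,k)}(v_j^T v_k).\]

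The heart of the argument is to show that the entrywise difference matrix $N$, with $N_{(j,k)}=\arcsin(v_j^T v_k)-v_j^T v_k$, is positive semidefinite. Let $G$ denote the Gram matrix $G_{(j,k)}=v_j^T v_k$, which is PSD with unit diagonal, so every entry lies in $[-1,1]$. I would invoke the Taylor expansion $\arcsin(t)=t+\sum_{\ell\geq 1}c_\ell\,t^{2\ell+1}$, whose coefficients $c_\ell=\frac{(2\ell)!}{4^\ell(\ell!)^2(2\ell+1)}$ are all nonnegative and whose series converges on $[-1,1]$. Applied entrywise, this expresses $N=\sum_{\ell\geq 1}c_\ell\,G^{\circ(2\ell+1)}$, a nonnegative combination of Hadamard (entrywise) powers of $G$. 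By the Schur product theorem the Hadamard product of PSD matrices is PSD, so each $G^{\circ p}$ is PSD, and therefore $N$ is PSD.

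Finally I would conclude by combining the two positive semidefinite matrices: for symmetric PSD matrices $M$ and $N$ the Frobenius inner product satisfies $\sum_{j,k}M_{(j,k)}N_{(j,k)}=\mathrm{tr}(MN)\geq 0$, which is exactly the displayed inequality and hence the proposition. The main obstacle is the positive semidefiniteness of $N$: once the nonnegativity of the $\arcsin$ Taylor coefficients is paired with the Schur product theorem the conclusion is immediate, but it is precisely this structural observation---that $\arcsin$ acts entrywise as a power series with nonnegative coefficients---that drives the $\pi/2$ factor, and verifying the coefficient signs and the convergence on $[-1,1]$ is the one place requiring care.
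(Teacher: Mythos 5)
Your proof is correct and follows essentially the same route as the paper's: the paper also computes $\E[x_j x_k]=1-\tfrac{2}{\pi}\arccos(v_j^T v_k)$ and reduces the claim to the sign of an entrywise power series of the Gram matrix (phrased via $\arccos$ with nonpositive higher-order coefficients rather than your $\arcsin$ with nonnegative ones), using the Schur product theorem on Hadamard powers and the nonnegativity of the trace inner product of PSD matrices. The two write-ups differ only in this cosmetic sign convention.
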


Thus for any objective value that can be achieved in the relaxed problem, with vectors $v_1,\ldots,v_n$, the original problem can achieve an objective value at least a $\frac{2}{\pi}$ fraction of it, since it does so in expectation over scalars $x_1,\ldots,x_n$ obtained by the randomized rounding procedure.

\begin{proof}[Proof of Proposition~\ref{prop:nesterov}]
As in the analysis of the Goemans-Williamson randomized rounding scheme for MAX-CUT, note that the expected value $\mathbb{E}[x_j x_k]=\mathbb{E}_r [\mathrm{sign}(r^T v_j)\mathrm{sign}(r^T v_k)]$, where $r$ is a randomly chosen direction.
Because of the rotational symmetry of the distribution of $r$, we may equivalently rotate $v_j$ and $v_k$ into the plane, from which we can see that, for $r$ also projected into the plane, $\mathrm{sign}(r^T v_j)\mathrm{sign}(r^T v_k)$ equals 1 when $r$ is within $\frac{\pi}{2}$ radians of \emph{both} $v_j,v_k$ or \emph{neither} of them. For a randomly chosen $r$ in the plane, this happens with probability $1-\frac{1}{\pi}\theta_{j,k}$, where $\theta_{j,k}$ is the angle between $v_j, v_k$, yielding that $\mathbb{E}[x_j x_k]=1-\frac{2}{\pi}\theta_{j,k}$.

As $\theta_{j,k}$ may be computed as $\arccos(v_j^T v_k)$, we may express the expected objective value after randomized rounding as \[\mathbb{E}[\sum_{j,k=1}^n M_{(j,k)} x_j x_k]=\sum_{j,k=1}^n M_{(j,k)} (1-\frac{2}{\pi}\arccos(v_j^T v_k))\]

Recall our overall aim, to show that this value times $\frac{\pi}{2}$ is greater than or equal to $\sum_{j,k=1}^n M_{(j,k)} (v_j^T v_k)$. Subtracting these two quantities means that we need to show that the following quantity is nonpositive: \begin{equation}\label{eq:arccos}\sum_{j,k=1}^n M_{(j,k)} (v_j^T v_k -\frac{\pi}{2}+\arccos(v_j^T v_k))\end{equation}

The power series expansion of $\arccos(y)$ starts $\arccos(y)=\frac{\pi}{2}-y+\sum_{\ell\geq 3} c_\ell \,y^\ell$ where all the remaining coefficients $c_\ell$ are nonpositive, and converges on the entire interval $y\in[-1,1]$. Thus Equation~\ref{eq:arccos} equals \begin{equation}\label{eq:arccos2}\sum_{j,k=1}^n \left(M_{(j,k)} \sum_{\ell\geq 3} c_\ell(v_j^T v_k)^\ell\right)\end{equation}

Since the matrix with $(j,k)$ entry $v_j^T v_k$ is positive semidefinite for any vectors $v_1,\ldots,v_n$, and since elementwise raising a positive semidefinite matrix to a positive integer power yields another positive semidefinite matrix, Equation~\ref{eq:arccos2} can be reexpressed as $\sum_{\ell\geq 3}\sum_{j,k=1}^n M_{(j,k)} N^{(\ell)}_{(j,k)}$ for some \emph{negative} semidefinite matrices $N^{(\ell)}$, which is thus clearly less than or equal to 0, as desired.
\end{proof}

Combining the lower bounds and upper bounds of this section immediately yields Proposition~\ref{prop:approx}.

\section{Proof of Theorem~\ref{thm:full-information}}\label{ap:proofFullInfo}
For ease of exposition, we restate the Theorem~\ref{thm:full-information}:

\paragraph{Theorem.}\emph{Algorithm~\ref{alg:linear-full}, given a description of the joint distribution of sample and target sets $(A_1,b_1),\ldots,(A_m,b_m)$, runs in polynomial time, and returns coefficients for a semilinear estimator whose expected squared error is within a $\frac{\pi}{2}$ factor of that of the best semilinear estimator. The value of the objective function achieved by $\widehat{V}$ is $m$ times the Proposition~\ref{prop:approx} SDP bound on the mean squared error of the best semilinear estimator.}
\medskip

\begin{proof}
Proposition~\ref{prop:approx} describes a convex optimization problem to approximate to within a factor of $\frac{\pi}{2}$ the performance of an estimator specified by vectors $a_1,\ldots,a_m\in\mathbb{R}^n$. We thus consider optimizing Equation~\ref{eq:prop-approx} over this choice (omitting the $\frac{1}{m}$ factor for convenience):
\begin{equation}\label{eq:minmax}
\min_{a_i: \{j:a_i(j)\neq 0\}\subseteq A_i}
\;\max_{V\,\mathrm{ psd},\, V_{(j,j)}\leq 1}
\sum_{i=1}^m\sum_{j,k=1}^n (a_i-b_i)_{(j)} (a_i-b_i)_{(k)}V_{(j,k)}
\end{equation}
By Proposition~\ref{prop:approx}, this minimum (if we can efficiently find it), will be within a factor of $\frac{\pi}{2}$ of the performance of the best semilinear estimator, and the vectors $a_1,\ldots,a_m$ that achieve this minimum will describe an estimator with this performance.

We proceed by invoking von Neumann's minimax theorem.

\begin{fact}
Given a function $f(x,y)$ that is convex as a function of its first argument and concave as a function of its second argument, and given convex domains $X,Y$, at least one of which is bounded, then \[\min_{x\in X}\max_{y\in Y} f(x,y)=\max_{y\in Y} \min_{x\in X} f(x,y)\]
\end{fact}

The condition that ``at least one of $X,Y$ is bounded" is a relaxation of the original minimax theorem, shown sufficient by Sion~\cite{Sion:1958}.

We observe now that all the conditions of the minimax theorem are satisfied by the expression in Equation~\ref{eq:minmax}. As a function of $a_i$, the expression being optimized is the quadratic form with coefficients specified by the positive semidefinite matrix $V$; thus the expression is a convex function of $a_i$, and since such functions are summed over all $i\in\{1,\ldots,m\}$, the expression is a convex function of all the vectors $a_1\,\ldots,a_m$. Since the expression is \emph{linear} in $V$, it is thus also concave as a function of $V$. Finally, the domains of the vectors $a_1\,\ldots,a_m$, along with the matrix $V$ are both convex, and, since a positive semidefinite matrix must have each entry bounded by the size of the largest diagonal entry, the condition that $V$ has diagonal entries bounded by 1 induces the same bound on the size of all entries of $V$.

Thus we invoke the minimax theorem to conclude that the value of Equation~\ref{eq:minmax} is unchanged if we reverse the order of the min and the max:
\begin{equation}\label{eq:maxmin}\max_{V\,\mathrm{ psd},\, V_{(j,j)}\leq 1}\;\min_{a_i: \{j:a_i(j)\neq 0\}\subseteq A_i} \sum_{i=1}^m\sum_{j,k=1}^n (a_i-b_i)_{(j)} (a_i-b_i)_{(k)}V_{(j,k)}\end{equation}

Crucially, now, the inner minimization is simply a sum of positive semidefinite quadratic forms in each of the vectors $a_1,\ldots,a_m$. Reexpressing the inner sum in vector notation as $(a_i-b_i)^T V (a_i-b_i)$, the gradient of this quadratic form with respect to $a_i$ equals $2V (a_i-b_i)$. Thus, subject to the constraint that $a_i$ can only be nonzero on coordinates in $A_i$, if there exists a vector $a_i$ such that $V (a_i-b_i)=0$ on coordinates $A_i$, then this $a_i$ attains the minimum; and otherwise the minimum is $-\infty$. The solution for $a_i$, restricted to the coordinates $A_i$ is thus ${V}_{(A_i,A_i)}^{-1}{V}_{(A_i)} b_i$ (or, when ${V}_{(A_i,A_i)}$ is singular, ${V}_{(A_i,A_i)}^{+}{V}_{(A_i)} b_i$ is the least-squares solution). Plugging this $a_i$ into the quadratic form yields $b_i^T (V-V_{(A_i)}^T V_{(A_i,A_i)}^{-1}V_{(A_i)} )b_i$ for the inner minimization of the $i^\textrm{th}$ term of the objective function. Finally, because of the setup of the minimax theorem, this expression must be a concave function of $V$, letting us conclude that Algorithm~\ref{alg:linear-full} can in fact conduct the optimization in polynomial time, as desired.

(As a side note, directly proving the above objective function is concave is a bit strange; it is a consequence of the fact that for positive definite $V$, and vectors $x$, the expression $x^T M^{-1} x$ is convex \emph{as a function of both arguments}, implying it is convex even when both arguments are affine functions of the optimization variables.)
\end{proof}

\section{Proof of Theorem~\ref{thm1}: A Sample-Efficient Algorithm for Near-Optimal Semilinear Estimators}\label{sec:efficient}

While Algorithm~\ref{alg:linear-full} takes as input the entire description $(A_1,B_1),\ldots,(A_m,B_m)$ of the joint sample/target distribution, such a description might be (1) unavailable in practice and/or (2) have support $m$ that is exponentially large. To address both cases, in this section we design an algorithm that achieves essentially the performance guarantees of Algorithm~\ref{alg:linear-full} (as given by Theorem~\ref{thm:full-information}), though relying only on \emph{sampling} access to $P$.
Algorithm~\ref{alg:linear-sampling} will run in time polynomial in $n$ and \emph{independent} of the (possibly exponential) distribution description length $m$.

\begin{algorithm}[H]
\caption{Sampling algorithm to approximate the best semilinear estimator}
\label{alg:linear-sampling}

{\bf Input:} Accuracy parameter $\eps>0$; $t$ random samples from the joint distribution of sample and target sets, $(A_{s_1},b_{s_1}),\ldots,(A_{s_t},b_{s_t})$, where each $A_i\subset\{1,\ldots,n\}$ is the set of sample set indices in the $i^\textrm{th}$ case and each $b_i$ is a vector with uniform values over the target set in the $i^\textrm{th}$ case as in Definition~\ref{def:semilinear-notation}; and the actual instance to predict, specified by $(A,b)$ and the values $x_A$.
\medskip

For an $n\times n$ matrix $V$ and a set $A_i\in\{1,\ldots,n\}$, let $V_{A_i}$ denote $V$ restricted to the rows in $A_i$, and let $V_{A_i,A_i}$ denote $V$ restricted to \emph{both} rows and columns in $A_i$.

\begin{enumerate}
    \item Compute the concave maximization \begin{equation}\label{eq:alg2-formula}\widetilde{V}=\argmax_{V\succeq\eps,\, V_{j,j}\leq 1}\;\sum_{i=1}^t b_{s_i}^T (V-V_{(A_{s_i})}^T V_{(A_{s_i},A_{s_i})}^{-1}V_{(A_{s_i})} )b_{s_i}\end{equation}
    \item Output the estimate $x_A \widetilde{V}_{(A,A)}^{-1}\widetilde{V}_{(A)} b$.
\end{enumerate}
\end{algorithm}

As compared with Algorithm~\ref{alg:linear-full}, Algorithm~\ref{alg:linear-sampling} restricts the domain of optimization to matrices $V$ that have eigenvalues at least $\eps$, instead of at least 0 (which is a convex restriction). Crucially, instead of summing over all $m$ possible sample/target set possibilities, the optimization is over a small subset of size $t$, obtained by sampling. Finally, the output of this algorithm is phrased as a single estimate for the data in question (described to the algorithm via the triple $A,b, x_A$, as opposed to Algorithm~\ref{alg:linear-full}, which returned the entire list of $m$ semilinear estimator coefficients). The following theorem, characterizing the performance of the above algorithm, immediately implies Theorem~\ref{thm1}.% The prediction comes from both the description of the near-optimal algorithm encoded in $\widetilde{V}$, calculated from the sampled indices provided to the algorithm, combined with the data about the particular instance being predicted: the identities of the sample and target sets, encoded as $A,b$, along with the sample data for indices $A$, as $x_A$.

\begin{theorem}\label{thm:sampling}
The mean squared error of the estimate output by Algorithm~\ref{alg:linear-sampling}, over the randomness of the queried sample and target sets $(A,b)$, is within a multiplicative $\frac{\pi}{2}$ factor and an additive $6\eps$ factor of the performance of the optimum semilinear estimator, with probability $1-e^{-t\cdot \eps^5/poly(n)}$ over the sampled inputs $(A_{s_1},b_{s_1}),\ldots,(A_{s_t},b_{s_t})$. The probability of failure can thus be made exponentially small in $n$ by using $t=poly(n)/\eps^5$ samples, for a sufficiently large polynomial in $n$.
\end{theorem}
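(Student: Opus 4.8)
The plan is to deduce Theorem~\ref{thm:sampling} from three ingredients: an \emph{exact} saddle-point identity on the empirical problem (the same minimax mechanism used for Theorem~\ref{thm:full-information}), a \emph{uniform} law of large numbers transferring empirical to population quantities, and the Grothendieck/Nesterov rounding bound (Proposition~\ref{prop:nesterov}) combined with a small positive-definite perturbation to pay for the eigenvalue floor. Write $\mathcal D_\eps := \{V : V\succeq\eps I,\ V_{jj}\le 1\}$, a compact convex body of dimension $O(n^2)$. For a query $(A,b)$ and $V\in\mathcal D_\eps$ let $a_V(A,b)$ be the vector equal to $V_{(A,A)}^{-1}V_{(A)}b$ on $A$ and $0$ elsewhere (exactly the coefficients in Step~2 of Algorithm~\ref{alg:linear-sampling}), and set $\Phi_{(A,b)}(V,W):=(a_V(A,b)-b)^\top W (a_V(A,b)-b)$. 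Let $F(V,W):=\E_{(A,b)\sim P}[\Phi_{(A,b)}(V,W)]$ and let $\hat F$ be its average over the $t$ sampled sets; note $g_{(A,b)}(V)=\Phi_{(A,b)}(V,V)$, so $G(V):=F(V,V)$ is the population objective, $\hat F(V,V)$ is the objective of Equation~\ref{eq:alg2-formula}, and $\widetilde V=\argmax_{\mathcal D_\eps}\hat F(V,V)$.

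First I would nail down the empirical saddle point. For fixed $\widetilde V\in\mathcal D_\eps$ the map $a\mapsto\hat F$ is, on the free coordinates $A_{s_i}$, a quadratic form with Hessian $2\widetilde V_{(A_{s_i},A_{s_i})}\succeq 2\eps I$, hence strongly convex with the \emph{unique} minimizer $a_{\widetilde V}$. Running Sion's minimax theorem on the empirical problem over $\mathcal D_\eps$ exactly as in the proof of Theorem~\ref{thm:full-information} gives $\min_a\max_{W\in\mathcal D_\eps}\hat F = \max_{W\in\mathcal D_\eps}\hat F(W,W)=\hat F(\widetilde V,\widetilde V)$, and since $a_{\widetilde V}$ is the unique best response to the optimal $\widetilde V$ it is minimax-optimal. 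This yields the key exact identity $\max_{W\in\mathcal D_\eps}\hat F(\widetilde V,W)=\hat F(\widetilde V,\widetilde V)$: on the \emph{empirical} distribution, the worst-case (SDP-relaxed) error of the output estimator equals the empirical SDP optimum.

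The main obstacle is the transfer to the population, for which I would prove the uniform bound $\sup_{V,W\in\mathcal D_\eps}|\hat F(V,W)-F(V,W)|\le\eps$ with the stated probability. This is precisely where the eigenvalue floor in Algorithm~\ref{alg:linear-sampling} (as opposed to Algorithm~\ref{alg:linear-full}) is essential: $V\succeq\eps I$ forces $V_{(A,A)}\succeq\eps I$, so $\|V_{(A,A)}^{-1}\|\le 1/\eps$, which makes the coefficients $a_V(A,b)$, and hence $\Phi_{(A,b)}(V,W)$, uniformly bounded by $\mathrm{poly}(n)/\mathrm{poly}(\eps)$ and Lipschitz in $(V,W)$ with a $\mathrm{poly}(n)/\mathrm{poly}(\eps)$ constant; without the floor $\Phi$ blows up along near-singular directions and no such bound exists. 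Given boundedness and Lipschitzness, a covering-number argument over the $O(n^2)$-dimensional set $\mathcal D_\eps\times\mathcal D_\eps$, together with Hoeffding's inequality and a union bound over the net, gives failure probability $e^{-t\eps^5/\mathrm{poly}(n)}$, so $t=\mathrm{poly}(n)/\eps^5$ suffices; bookkeeping the exponents of $n$ and $\eps$ in the range and Lipschitz constant is the one genuinely laborious part.

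Finally I would assemble the chain. By the uniform bound and the empirical identity, $\max_{W\in\mathcal D_\eps}F(\widetilde V,W)\le\max_{W\in\mathcal D_\eps}\hat F(\widetilde V,W)+\eps=\hat F(\widetilde V,\widetilde V)+\eps\le\max_{V\in\mathcal D_\eps}F(V,V)+2\eps=\max_{V\in\mathcal D_\eps}G(V)+2\eps$. Because $\mathcal D_\eps\subseteq\{V\succeq0,\,V_{jj}\le1\}$, the last maximum is at most the full semidefinite optimum, which by the minimax identity of Theorem~\ref{thm:full-information} and Proposition~\ref{prop:approx} lies in $[p^*,\tfrac\pi2 p^*]$, where $p^*$ is the optimal semilinear performance; hence $\max_{W\in\mathcal D_\eps}F(\widetilde V,W)\le\tfrac\pi2 p^*+2\eps$. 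It remains to convert this floored-SDP bound into a bound on the true worst-case expected error $p_{\mathrm{alg}}=\max_{x\in\{-1,1\}^n}x^\top M x$ of the output estimator, where $M=\E_{(A,b)}[(a_{\widetilde V}-b)(a_{\widetilde V}-b)^\top]\succeq0$ and $F(\widetilde V,W)=\langle M,W\rangle$. For each $x\in\{-1,1\}^n$ the matrix $W_x=(1-\eps)xx^\top+\eps I$ lies in $\mathcal D_\eps$ and satisfies $\langle M,W_x\rangle=(1-\eps)\,x^\top M x+\eps\,\mathrm{tr}(M)\ge(1-\eps)\,x^\top M x$, so $x^\top M x\le\frac{1}{1-\eps}\max_{W\in\mathcal D_\eps}\langle M,W\rangle\le\frac{\tfrac\pi2 p^*+2\eps}{1-\eps}$. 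Using $p^*\le1$ (the all-zero estimator already has error at most $1$) and taking $\eps$ small, the right-hand side is at most $\tfrac\pi2 p^*+6\eps$, which is the claimed bound; the Grothendieck rounding factor $\tfrac\pi2$ enters only through the last inequality bounding the full SDP optimum by $\tfrac\pi2 p^*$.
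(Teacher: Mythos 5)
Your proposal is correct, but it takes a genuinely different route from the paper's, and the difference is instructive. The paper proves concentration only for the \emph{diagonal} functional $\widehat f(V)=F(V,V)$ --- whose terms lie in $[0,1]$ (Lemma~\ref{lem:bounded}) and are $\mathrm{poly}(n)/\eps^2$-Lipschitz on $\mathcal{D}_\eps$ (Lemma~\ref{lem:derivative}) --- over an $\eps'$-net, then runs a chain of six inequalities to conclude $\widehat f(\widetilde V)\ge \widehat f(\widehat V)-5\eps$, i.e.\ that $\widetilde V$ nearly maximizes the \emph{population} max--min objective, paying for the eigenvalue floor on the optimizer's side via Lemma~\ref{lem:epsilon}. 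You instead prove uniform convergence of the bilinear functional $F(V,W)$ in \emph{both} arguments, extract the exact empirical saddle-point identity $\max_{W\in\mathcal{D}_\eps}\hat F(\widetilde V,W)=\hat F(\widetilde V,\widetilde V)$ (using that the floor makes the best response unique), and remove the floor on the \emph{adversary's} side via $W_x=(1-\eps)xx^\top+\eps I$. Your route buys something real: the passage from ``$\widetilde V$ nearly maximizes $\widehat f$'' to ``the estimator derived from $\widetilde V$ has near-optimal worst-case error'' --- which is essentially the paper's concluding sentence --- is not automatic, since in a convex--concave game the best response to an approximately optimal strategy can have worst-case value far above the saddle value (e.g.\ for $g(a,v)=av$ on $[-1,1]^2$ with saddle value $0$, the strategy $v=\eps$ is $\eps$-optimal for the maximizer, yet its best response $a=-1$ has worst case $1$). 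Your two-sided uniform convergence combined with the uniqueness-of-best-response identity is precisely the bridge that makes this step rigorous, and it bounds directly the quantity the theorem asserts. What the paper's route buys is cleaner probabilistic bookkeeping: its random terms are $[0,1]$-bounded, which is where the clean $\eps^5$ exponent comes from. In your argument the class $\Phi_{(A,b)}(V,W)$ has range and Lipschitz constant $\mathrm{poly}(n)/\mathrm{poly}(\eps)$, so the claimed failure probability $e^{-t\eps^5/\mathrm{poly}(n)}$ does not follow from generic bookkeeping and, done naively, the exponent can come out worse than $\eps^5$. To land on (or beat) the stated exponent, use the sharper bound $(a_V-b)^\top V(a_V-b)\le b^\top V b\le 1$, which together with $V\succeq\eps I$ gives $\|a_V-b\|^2\le 1/\eps$ and hence $\Phi\le n/\eps$; with that range, the Hoeffding-plus-net computation yields $t=\mathrm{poly}(n)/\eps^5$ (indeed roughly $\eps^{-4}$ up to logarithmic factors). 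This bookkeeping is the one place where your sketch needs real work; the structure of the argument is sound.
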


We first prove three structural lemmas that characterize the optimization objective, and then put the pieces together making use of concentration bounds, applied over an $\eps$-net of matrices in the domain of the optimization.

\begin{lemma}\label{lem:bounded}
For any valid $V$, the $i^{\textrm{th}}$ term in the sum of Equation~\ref{eq:alg-formula}---or equivalently Equation~\ref{eq:maxmin} or Equation~\ref{eq:alg2-formula}---is between 0 and 1.
\end{lemma}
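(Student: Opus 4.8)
The plan is to sidestep the Schur-complement form of the $i$-th term entirely and instead use the variational characterization established during the proof of Theorem~\ref{thm:full-information}. Recall from that argument that, for any fixed psd matrix $V$, the quantity
\begin{equation}
b_i^T\bigl(V - V_{(A_i)}^T V_{(A_i,A_i)}^{-1}V_{(A_i)}\bigr)b_i
= \min_{a_i:\,\{j:a_i(j)\neq 0\}\subseteq A_i}\;(a_i-b_i)^T V (a_i-b_i),\nonumber
\end{equation}
the minimizer being $a_i = V_{(A_i,A_i)}^{-1}V_{(A_i)}b_i$ on the coordinates $A_i$. This is exactly the reason the three displays (Equations~\ref{eq:alg-formula},~\ref{eq:maxmin}, and~\ref{eq:alg2-formula}) have the same $i$-th term. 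Once the term is viewed as this minimum of a quadratic form, both the lower and upper bounds become elementary.

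For the lower bound, since $V$ is positive semidefinite, every value $(a_i-b_i)^T V (a_i-b_i)$ is nonnegative, and hence so is the minimum over feasible $a_i$; thus the $i$-th term is at least $0$. For the upper bound, I would bound the minimum by plugging in a single convenient feasible point, namely $a_i = 0$, whose support is vacuously contained in $A_i$. This yields the $i$-th term $\le b_i^T V b_i$, so it suffices to show $b_i^T V b_i \le 1$. Writing $b_i = \frac{1}{|B_i|}\mathbf{1}_{B_i}$ for the (scaled) indicator of the target set, and factoring $V = U^T U$ with columns $u_1,\ldots,u_n$ obeying $\|u_j\|^2 = V_{(j,j)} \le 1$, we obtain
\begin{equation}
b_i^T V b_i = \frac{1}{|B_i|^2}\Bigl\|\sum_{j\in B_i} u_j\Bigr\|^2
\le \frac{1}{|B_i|^2}\Bigl(\sum_{j\in B_i}\|u_j\|\Bigr)^2 \le 1,\nonumber
\end{equation}
using the triangle inequality and the unit-norm bound on each column coming from the diagonal constraint $V_{(j,j)}\le 1$.

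There is no serious obstacle here; the only mild subtlety is the upper bound, where the key observation is that $a_i=0$ is always a feasible estimator and that the averaging structure of $b_i$ interacts with the unit-diagonal psd constraint to give exactly $1$. (Concretely, $a_i=0$ corresponds to the trivial estimator that always predicts $0$, whose worst-case squared error against a target mean bounded in $[-1,1]$ is at most $1$—so the bound is also intuitive semantically.) The lower bound is immediate from positive semidefiniteness alone. I would present the variational identity first, then dispatch the two bounds in a single short paragraph each.
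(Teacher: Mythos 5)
Your proposal is correct and follows essentially the same route as the paper's proof: both reduce the term to the inner minimization $\min_{a_i}(a_i-b_i)^T V(a_i-b_i)$, get nonnegativity from positive semidefiniteness, and get the upper bound by plugging in the feasible point $a_i=0$. The only cosmetic difference is in bounding $b_i^T V b_i\le 1$, where the paper notes that all entries of $V$ have magnitude at most $1$ and that $b_i$ is nonnegative with entries summing to $1$, while you use a Gram factorization and the triangle inequality---two phrasings of the same observation.
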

\begin{proof}
From the derivation of Equation~\ref{eq:alg-formula} in the proof of Theorem~\ref{thm:full-information}, the inner summation is equal to the inner minimization in Equation~\ref{eq:maxmin}, which we analyze instead.  Since the quadratic form specified by $V$ in Equation~\ref{eq:maxmin} is positive semidefinite, it thus always evaluates to a nonnegative number proving the first part of the claim.

Consider the inner minimum when all coefficients $a_i$ are identically 0. Since each $b_i$ is a nonnegative vector of sum 1, and thus since all entries of $V$ have magnitude at most 1 (because of the diagonal constraint, and the positive semidefinite constraint), we have $\sum_{j,k=1}^n b_{i(j)} b_{i(k)} V_{j,k}\leq 1$, as desired.

\end{proof}

\begin{lemma}\label{lem:epsilon}
The optimum objective value of the $\max$ in Equation~\ref{eq:alg-formula} decreases by at most $\eps m$ if the domain of the maximization is further restricted so that $V$, instead of being positive semidefinite, must now have all eigenvalues at least $\eps$.
\end{lemma}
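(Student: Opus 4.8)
The plan is to exhibit, for any optimizer of the unrestricted maximization in Equation~\ref{eq:alg-formula}, a nearby point that is feasible for the eigenvalue-restricted maximization (the one appearing in Equation~\ref{eq:alg2-formula}, requiring every eigenvalue of $V$ to be at least $\eps$) and whose objective value is smaller by at most $\eps m$. Since the restricted optimum is at least the value attained at any feasible point, this immediately yields the claim. Concretely, let $\widehat V$ denote an optimizer of the unrestricted problem, so $\widehat V$ is psd with $\widehat V_{(j,j)}\leq 1$, and consider the interpolate $V' = (1-\eps)\widehat V + \eps I$ (for $\eps\leq 1$). I would first check feasibility: $V'$ is a convex combination of psd matrices, and since $I$ commutes with $\widehat V$, each eigenvalue of $V'$ equals $(1-\eps)\lambda+\eps\geq\eps$ where $\lambda\geq 0$ is the corresponding eigenvalue of $\widehat V$; moreover each diagonal entry satisfies $V'_{(j,j)}=(1-\eps)\widehat V_{(j,j)}+\eps\leq 1$. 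Hence $V'$ lies in the restricted domain (and, as a bonus, $V'_{(A_i,A_i)}\succeq\eps I$ is genuinely invertible, so no pseudoinverse subtleties arise at $V'$).

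Next I would control the objective using the variational form of each summand established in the proof of Theorem~\ref{thm:full-information} (the passage deriving Equation~\ref{eq:maxmin}): the $i$-th summand admits the representation
\[
b_i^T\bigl(V - V_{(A_i)}^T V_{(A_i,A_i)}^{-1} V_{(A_i)}\bigr)b_i \;=\; \min_{a_i:\,\{j:a_i(j)\neq 0\}\subseteq A_i}(a_i-b_i)^T V(a_i-b_i),
\]
which is well-defined for every psd $V$ (interpreting the inverse as a pseudoinverse). The key point is that this quadratic form is linear in $V$, so for every admissible $a_i$,
\[
(a_i-b_i)^T V'(a_i-b_i) = (1-\eps)(a_i-b_i)^T \widehat V(a_i-b_i) + \eps\,\|a_i-b_i\|^2 \;\geq\; (1-\eps)(a_i-b_i)^T \widehat V(a_i-b_i).
\]
Minimizing both sides over the admissible $a_i$ (a pointwise inequality is preserved under taking minima) shows that the $i$-th summand evaluated at $V'$ is at least $(1-\eps)$ times the $i$-th summand evaluated at $\widehat V$.

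Finally I would sum over $i$ and invoke Lemma~\ref{lem:bounded}. Writing $\mathrm{OPT}$ for the unrestricted optimum value $\sum_{i=1}^m b_i^T(\widehat V - \widehat V_{(A_i)}^T\widehat V_{(A_i,A_i)}^{-1}\widehat V_{(A_i)})b_i$, and using that each of the $m$ summands lies in $[0,1]$ so that $\mathrm{OPT}\leq m$, the value at $V'$ satisfies
\[
\text{(objective at } V'\text{)} \;\geq\; (1-\eps)\,\mathrm{OPT} \;=\; \mathrm{OPT} - \eps\,\mathrm{OPT} \;\geq\; \mathrm{OPT} - \eps m,
\]
and since $V'$ is feasible for the restricted problem, its optimum is at least this quantity. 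I do not expect a genuine obstacle here: the single idea needed is to interpolate $\widehat V$ toward the identity by the $(1-\eps,\eps)$ convex combination, which simultaneously lifts every eigenvalue above $\eps$, preserves the diagonal constraint, and—because the objective is linear in $V$ with each term nonnegative and at most $1$—costs only a multiplicative $(1-\eps)$ factor, i.e.\ an additive $\eps m$.
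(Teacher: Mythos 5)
Your proposal is correct and follows essentially the same route as the paper's proof: both interpolate the unrestricted optimizer toward the identity via $V' = \eps I + (1-\eps)\widehat V$, observe feasibility for the eigenvalue-restricted domain, and use the boundedness of the summands (Lemma~\ref{lem:bounded}) to convert the multiplicative $(1-\eps)$ loss into an additive $\eps m$ loss. If anything, your use of the variational (inner-minimization) representation to handle the concavity of each summand is slightly more careful than the paper's terse appeal to linearity, but the underlying idea is identical.
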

\begin{proof}%{eq:maxmin}
From the derivation of Equation~\ref{eq:alg-formula} in the proof of Theorem~\ref{thm:full-information}, the inner summation is equal to the inner minimization in Equation~\ref{eq:maxmin}, which we analyze instead.

%The objective value of Equation~\ref{eq:maxmin} is at least 0, since the quadratic form specified by $V$ is positive semidefinite and thus always evaluates to a nonnegative number. Further, the objective value of Equation~\ref{eq:maxmin} is at most $m$ by applying Lemma~\ref{lem:bounded} to each of the $m$ terms.%which is attained in the inner minimum when all coefficients $a_i$ are identically 0, since each $b_i$ is a nonnegative vector of sum 1, and thus since all entries of $V$ have magnitude at most 1 (because of the diagonal constraint, and the positive semidefinite constraint), we have $\sum_{j,k=1}^n b_{i(j)} b_{i(k)} V_{j,k}\leq 1$.

Letting $V$ be the optimal matrix in Equation~\ref{eq:maxmin} we instead consider the matrix $V_\eps=\eps I_n +(1-\eps) V$ where $I_n$ is the $n\times n$ identity matrix. Since the objective is linear in $V$, when evaluated at $V_\eps$ it will have value $\eps$ times the objective value for $I_n$---which is nonnegative by Lemma~\ref{lem:bounded}---plus $(1-\eps)$ times its optimal objective value at $V$---which is at most 1 by Lemma~\ref{lem:bounded}. Thus $V_\eps$ has objective value within $\eps$ of the optimum, as desired.
\end{proof}

\begin{lemma}\label{lem:derivative}
For a fixed symmetric matrix $V$ whose eigenvalues are all at least $\eps$, the expression inside the sum of Equation~\ref{eq:alg-formula}, for any $i$, varies with respect to changing a coordinate of $V$ by at most $\left|\frac{d}{d V_{j,k}}b_i^T (V-V_{(A_i)}^T V_{(A_i,A_i)}^{-1}V_{(A_i)} )b_i\right|\leq\frac{1}{\eps^2}poly(n)$.
\end{lemma}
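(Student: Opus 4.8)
The plan is to split the $i$th term into its linear part $b_i^\top V b_i$ and its Schur-complement part $b_i^\top V_{(A_i)}^\top V_{(A_i,A_i)}^{-1} V_{(A_i)} b_i$, and to bound the coordinate derivative of each by elementary operator-norm estimates. To set up, I would abbreviate $u = V_{(A_i)} b_i \in \mathbb{R}^{|A_i|}$ and $Q = V_{(A_i,A_i)}$, so the second term is $u^\top Q^{-1} u$. First I would record the norm bounds that drive everything: (i) $\|b_i\| \le 1$, since $b_i$ has $|B_i|$ entries each equal to $1/|B_i|$, whence $\|b_i\|^2 = 1/|B_i| \le 1$; (ii) $\|V\| \le n$ in operator norm, since $V$ is psd with diagonal entries at most $1$, so $\mathrm{tr}(V) \le n$ bounds its top eigenvalue; (iii) $\|u\| \le \|V_{(A_i)}\|\,\|b_i\| \le \|V\| \le n$, because restricting to a subset of rows does not increase the operator norm; and, crucially, (iv) $Q \succeq \eps I$, so $\|Q^{-1}\| \le 1/\eps$.

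The key structural point, which I would emphasize, is (iv): the hypothesis $V \succeq \eps I$ passes to every principal submatrix. Indeed, for any vector $y$ supported on $A_i$ we have $y^\top Q y = y^\top V y \ge \eps \|y\|^2$, so $Q \succeq \eps I$ and its inverse is well-defined with $\|Q^{-1}\| \le 1/\eps$; this is exactly what makes the expression differentiable and keeps the inverse controlled. Next I would differentiate along a symmetric perturbation $E$ supporting the coordinate $(j,k)$ (changing $V_{j,k}$ and $V_{k,j}$ together), noting $\|E\| = O(1)$. Writing $\dot u = E_{(A_i)} b_i$ and $\dot Q = E_{(A_i,A_i)}$, each of operator norm $O(1)$, and using the standard identity $\frac{d}{dt} Q^{-1} = -Q^{-1} \dot Q\, Q^{-1}$, the derivative of $u^\top Q^{-1} u$ equals $2\, u^\top Q^{-1} \dot u - u^\top Q^{-1} \dot Q\, Q^{-1} u$.

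Finally I would bound the resulting pieces one at a time. The linear part contributes $|b_i^\top E b_i| \le \|b_i\|^2 \|E\| = O(1)$; the cross term contributes $|2\, u^\top Q^{-1} \dot u| \le 2\|u\|\,\|Q^{-1}\|\,\|\dot u\| = O(n/\eps)$; and the dominant term contributes $|u^\top Q^{-1} \dot Q\, Q^{-1} u| \le \|u\|^2\,\|Q^{-1}\|^2\,\|\dot Q\| = O(n^2/\eps^2)$. Summing yields the claimed $O(n^2/\eps^2) = \frac{1}{\eps^2}\mathrm{poly}(n)$. There is no real obstacle beyond correctly propagating the eigenvalue lower bound from $V$ to $Q$ in step (iv) so that $\|Q^{-1}\|$ is controlled; the rest is routine matrix calculus and operator-norm bookkeeping, and the worst power $1/\eps^2$ arises, as expected, from differentiating the matrix inverse, which brings down two factors of $Q^{-1}$.
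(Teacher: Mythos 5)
Your proof is correct and takes essentially the same route as the paper's: both pass the eigenvalue lower bound $V\succeq \eps I$ to the principal submatrix $Q=V_{(A_i,A_i)}$ to conclude $\|Q^{-1}\|\le 1/\eps$, then differentiate the matrix inverse (which brings down two factors of $Q^{-1}$ and hence the $1/\eps^2$), and absorb everything else into $poly(n)$. Your operator-norm bookkeeping simply makes explicit what the paper compresses into ``applying the product rule can increase this by only a $poly(n)$ factor.''
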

\begin{proof}
Since $V$ has eigenvalues at least $\eps$, so does any (principal) submatrix $V_{(A_i,A_i)}$. Thus the inverse $V_{(A_i,A_i)}^{-1}$ has eigenvalues at most $\frac{1}{\eps}$, and thus the $L_2$ norm of any column of $V_{(A_i,A_i)}^{-1}$ is at most $\frac{1}{\eps}$. Since $\frac{d}{d V_{j,k}} V_{(A_i,A_i)}^{-1}$ equals negative the inner product of the columns (or rows) $j$ and $k$ of $V_{(A_i,A_i)}^{-1}$, this derivative is thus at most $\frac{1}{\eps^2}$. Applying the product rule can increase this by only a $poly(n)$ factor.
\end{proof}

We assemble these pieces to prove the performance of Algorithm~\ref{alg:linear-sampling}.

\begin{proof}[Proof of Theorem~\ref{thm:sampling}]

For any fixed $V$ in Equation~\ref{eq:alg-formula}, the average of the $m$ terms in the sum may be estimated as the empirical average of the $t$ terms we can compute from our randomly sampled inputs $(A_{s_1},b_{s_1}),\ldots,(A_{s_t},b_{s_t})$. Since, by Lemma~\ref{lem:bounded}, each term is between 0 and 1, the standard Chernoff/Hoeffding bounds imply that the empirical mean of $t$ random terms will be within $\eps$ of the true mean except with probability $e^{-2\eps^2 t}$.

Let $\eps'=\eps^3/poly(n)$ be a radius such that, by Lemma~\ref{lem:derivative}, any two matrices satisfying the constraints of the $\argmax$ of Equation~\ref{eq:alg2-formula} that are within distance $\eps'$ of each other must yield values for each term in the sum, that are within $\eps$ of each other. Consider applying the concentration bounds of the previous paragraph to each $V$ in an $\eps'$-net of matrices satisfying the conditions of Equation~\ref{eq:alg2-formula}---namely, positive definite with eigenvalues at least $\eps$, and all diagonal entries at most 1. Recall that an $\eps'$-net will have each matrix within distance $\eps'$ of one of the matrices in the net, and that the net will consist of $e^{poly(n)/\eps'}$ matrices. As we consider bounds up to $poly(n)$ factors, the choice of norm for the matrices does not matter, but for concreteness, consider the $\eps'$-net to be defined in the Frobenius norm. By the union bound, the Chernoff/Hoeffding bound of the previous paragraph applies for \emph{every} $V$ in the $\eps'$-net except with probability $e^{-2\eps^2 t+poly(n)/\eps'}$, which is thus negligible when the number of samples is $t=poly(n)/\eps'\eps^2=poly(n)/\eps^5$.

We thus show that the performance of the estimator described by the sampled $\widetilde{V}$ is close to the performance of the optimal semilinear estimator $\widehat{V}$ with eigenvalues at least $\eps$. Let $\widetilde{V}',\widehat{V}'$ respectively represent the nearest elements of the $\eps'$-net to  $\widetilde{V},\widehat{V}$ respectively. For ease of notation, we let $\widehat{f}(V)$ and $\widetilde{f}(V)$ respectively describe the functions of $V$ described by the average term in the sums of Equations~\ref{eq:alg-formula} and~\ref{eq:alg2-formula} respectively. Thus we have \[\widehat{f}(\widetilde{V})\geq \widehat{f}(\widetilde{V}')-\eps\geq \widetilde{f}(\widetilde{V}')-2\eps \geq \widetilde{f}(\widetilde{V})-3\eps \geq \widetilde{f}(\widehat{V}')-3\eps\geq \widehat{f}(\widehat{V}')-4\eps\geq \widehat{f}(\widehat{V})-5\eps,\]
where the inequalities hold respectively because of (1) the $\eps'$-nearness of $\widetilde{V},\widetilde{V}'$ combined with the derivative guarantee of Lemma~\ref{lem:derivative} as applied to $\widehat{f}$; (2) the Chernoff/Hoeffding bound at the point $\widetilde{V}'$ of the $\eps'$-net; (3) the $\eps'$-nearness of $\widetilde{V},\widetilde{V}'$ combined with the derivative guarantee of Lemma~\ref{lem:derivative} as applied to $\widetilde{f}$; (4) the fact that $\widetilde{V}$ attains the maximum of $\widetilde{f}$; (5) the Chernoff/Hoeffding bound at the point $\widetilde{V}'$ of the $\eps'$-net; and (6) the $\eps'$-nearness of $\widehat{V},\widehat{V}'$ combined with the derivative guarantee of Lemma~\ref{lem:derivative} as applied to $\widehat{f}$.

Thus, the algorithm described by $\widetilde{V}$ has true performance within $5\eps$ of the optimal under the eigenvalue constraint, achieved by $\widehat{V}$. By Lemma~\ref{lem:epsilon}, $\widehat{V}$ itself is within $\eps$ of the true optimal performance of Equation~\ref{eq:alg-formula}, which in turn is within a factor of $\frac{\pi}{2}$ of that of the best semilinear estimator, as desired.
\end{proof}

\section{Suboptimality of Semilinear Schemes (Fact~\ref{fact:nonlin})}\label{sec:suboptimality}
Via a computer-aided brute-force search over small examples, we found a distribution, $P$, for which the best semilinear algorithm had larger worst-case expected error than the best arbitrary scheme:

\begin{example}\label{ex:nonlinear}
Let $n=4$. Consider the distribution over sampling from the population $\{1, 2, 3, 4\}$ that assigns a 0.3 probability to the following pairs of sample/target sets $(\{1,3\},\{2,4\})$, $(\{2,4\},\{1,3\})$, $(\{3,4\},\{1,2\})$ and a 0.05 probability to $(\{1,3, 4\},\{2\})$ and $(\{2, 3, 4\},\{1\})$.
The optimal semilinear scheme achieves worst-case expected squared error 0.6652, compared to 0.6627 for the optimal unconstrained scheme. Hence even for sample/target set distributions over $n=4$ datapoints, semilinear schemes are not always worst-case optimal.
\end{example}

\section{Proof of Theorem~\ref{thm:regression}: Linear Regression Setting}\label{sec:regression-proof}

We prove Theorem~\ref{thm:regression} here, which for clarity we restate and reintroduce slightly more formally.

We consider the following natural extension of our results to the setting of $d$-dimensional linear regression. Our regression results follow from a transparent application of the main results, Theorem~\ref{thm1} or Theorem~\ref{thm:full-information}, demonstrating the flexibility and scope of our approach. We emphasize that many variants of this model are interesting, and that a more specialized analysis may yield stronger bounds than what we show here.
\begin{definition}\label{def:regression}
Given a sample set $A$ and a target set $B$, where for each $i\in A$ we are given a pair $(x_i,y_i)$ with the independent variable $x_i\in[-1,1]^d$ and the dependent variable $y_i\in[-1,1]$, the goal is to recover a coefficient vector $\beta$ that minimizes the mean squared error on the target set, $\E_{i\sim B}[(x_i^\top\beta -y_i)^2]$, when additionally given access to the joint distribution $P$ from which $(A,B)$ are drawn. The least-squares coefficients are defined, as is standard, as $\beta=(E_{i\sim B}[x_i x_i^\top])^{-1} E_{i\sim B}[x_i y_i]$. As we do not have access to the target set data $(x_i,y_i)$ for $i\in B$, we instead must estimate it: depending on whether distribution $P$ is given explicitly, or via sample access, use Theorems~\ref{thm1} or~\ref{thm:full-information} respectively to estimate (in terms of $P$ and the sample data $(x_i,y_i)$ for $i\in A$) each of the $d^2$ entries in the matrix $Q=E_{i\in B}[x_i x_i^\top]$ and the $d$ entries in the vector $u=E_{i\in B}[x_i y_i]$, all of which are in $[-1,1]$. Our estimated coefficients are then $\hat{\beta}=\widehat{Q}^{-1}\hat{u}$.
\end{definition}

\begin{theorem*}
Given a regression problem as in Definition~\ref{def:regression}, where the distribution $P$ is specified either via sampling access or explicitly, and let $\alpha_P$ be the mean squared error guaranteed by Theorem~\ref{thm1} or ~\ref{thm:full-information} respectively, for estimating the mean of a scalar (which depends only on $P$). Then, letting $\sigma_d$ be the smallest singular value of the (uncentered covariance) matrix $Q$, for any $\delta>0$, the algorithm of Definition~\ref{def:regression} will return an estimate $\hat{\beta}$ of the least-squares regression coefficients $\beta$ such that with probability at least $1-\delta$ we have $||\beta-\hat{\beta}||\leq 3\frac{\sqrt{\alpha_P d^3/\delta}}{\sigma_d^2}$, provided this expression is at most $0.08$; in the case that the independent variables $x_i$ are known for the target set (and only the $y_i$'s are unknown) then except with $\delta$ probability, we have $||\beta-\hat{\beta}||\leq \frac{\sqrt{\alpha_P d/\delta}}{\sigma_d}$.
\end{theorem*}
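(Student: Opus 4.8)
The plan is to treat the regression problem as a collection of $d^2+d$ scalar mean-estimation problems, apply the guarantee of Theorem~\ref{thm1} to each, and then propagate the resulting entrywise errors through the matrix inversion $\hat\beta=\widehat{Q}^{-1}\hat u$ via standard perturbation analysis. The known-features case is a strictly easier special case that I would dispatch at the end.

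First I would observe that every quantity we must estimate is itself a target-set mean of something bounded in $[-1,1]$: the $(j,k)$ entry $Q_{jk}=\E_{i\sim B}[x_{i,j}x_{i,k}]$ and the $j$th entry $u_j=\E_{i\sim B}[x_{i,j}y_i]$ are means of products of magnitude at most $1$. Since the weights produced by Theorem~\ref{thm1} depend only on $(A,B)$ and $P$ and not on the data values, a single invocation yields an estimator that can be applied simultaneously to each of these $d^2+d$ value-vectors, and by the worst-case guarantee each resulting estimate has expected squared error at most $\alpha_P$. Writing $E_Q=\widehat{Q}-Q$ and $e_u=\hat u-u$, linearity of expectation over the shared draw of $(A,B)$ gives $\E[\|E_Q\|_F^2]\le d^2\alpha_P$ and $\E[\|e_u\|^2]\le d\alpha_P$. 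A single application of Markov's inequality, splitting the failure budget $\delta$ between the two quantities, then yields with probability at least $1-\delta$ bounds of the form $\|E_Q\|_F=O(d\sqrt{\alpha_P/\delta})$ and $\|e_u\|=O(\sqrt{d\alpha_P/\delta})$.

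Next I would carry out the perturbation step. Using $\widehat{Q}^{-1}-Q^{-1}=-\widehat{Q}^{-1}E_Q Q^{-1}$, one obtains the clean decomposition $\hat\beta-\beta=\widehat{Q}^{-1}(e_u-E_Q\beta)$, so that $\|\hat\beta-\beta\|\le\|\widehat{Q}^{-1}\|_{\mathrm{op}}\big(\|e_u\|+\|E_Q\|_{\mathrm{op}}\|\beta\|\big)$. Here $\|\beta\|\le\|Q^{-1}\|_{\mathrm{op}}\|u\|\le\sqrt d/\sigma_d$, since $\|u\|\le\sqrt d$ and $\|Q^{-1}\|_{\mathrm{op}}=1/\sigma_d$. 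To control $\|\widehat{Q}^{-1}\|_{\mathrm{op}}$ I would invoke Weyl's inequality, $\sigma_{\min}(\widehat{Q})\ge\sigma_d-\|E_Q\|_{\mathrm{op}}\ge\sigma_d-\|E_Q\|_F$, so that whenever $\|E_Q\|_F$ is small relative to $\sigma_d$ we get $\|\widehat{Q}^{-1}\|_{\mathrm{op}}\approx 1/\sigma_d$. Substituting the high-probability norm bounds, the dominant contribution is $\|\widehat{Q}^{-1}\|_{\mathrm{op}}\,\|E_Q\|_F\,\|\beta\|\approx\sigma_d^{-1}\cdot d\sqrt{\alpha_P/\delta}\cdot\sigma_d^{-1}\sqrt d=\sqrt{\alpha_P d^3/\delta}/\sigma_d^2$; careful bookkeeping of the constants, using $\sigma_d\le 1\le\sqrt d$ to fold the lower-order $\|e_u\|/\sigma_d$ term into the leading one, gives the stated factor $3$.

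The main obstacle is the mild circularity in controlling $\|\widehat{Q}^{-1}\|_{\mathrm{op}}$: the bound on $\hat\beta-\beta$ is only meaningful once $\widehat{Q}$ is known to be well-conditioned, which itself requires $\|E_Q\|_F$ to be small relative to $\sigma_d$. This is exactly what the hypothesis ``provided this expression is at most $0.08$'' supplies: the final bound $3\sqrt{\alpha_P d^3/\delta}/\sigma_d^2$ being small forces $\|E_Q\|_F=O(d\sqrt{\alpha_P/\delta})$ to be a small multiple of $\sigma_d$ (again via $\sigma_d\le\sqrt d$), so that $\sigma_{\min}(\widehat{Q})\ge\sigma_d/2$ and $\|\widehat{Q}^{-1}\|_{\mathrm{op}}\le 2/\sigma_d$ hold on the same high-probability event. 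Finally, in the known-features case $Q$ is computed exactly, so $E_Q=0$, the decomposition collapses to $\hat\beta-\beta=Q^{-1}e_u$, and $\|\hat\beta-\beta\|\le\|e_u\|/\sigma_d\le\sqrt{\alpha_P d/\delta}/\sigma_d$ follows directly from Markov's inequality applied to $\|e_u\|$ alone, with no conditioning hypothesis required.
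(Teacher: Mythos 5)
Your proposal is correct, and it follows the paper's overall reduction exactly: estimate the $d^2+d$ bounded quantities $Q_{jk}=\E_{i\sim B}[x_{i,j}x_{i,k}]$ and $u_j=\E_{i\sim B}[x_{i,j}y_i]$ using the mean-estimation guarantee, apply Markov's inequality with the failure budget split between $Q$ and $u$, and then propagate the errors through $\hat\beta=\widehat{Q}^{-1}\hat{u}$. Where you genuinely diverge from the paper is in the perturbation step. The paper splits by the triangle inequality into $\|Q^{-1}g\|$ and $\|(Q^{-1}-(Q+F)^{-1})(u+g)\|$, and controls the second term by an interpolation argument---differentiating $(Q+\lambda F)^{-1}$ along $\lambda\in[0,1]$ and lower-bounding the smallest singular value of $Q+\lambda F$---which forces it to carry $\|u+g\|\le\sqrt{d}+\sqrt{\alpha_P d/\delta}$ and the denominator $(\sigma_d-\sqrt{\alpha_P d^2/\delta})^2$ through the computation. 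You instead use the exact identity $\hat\beta-\beta=\widehat{Q}^{-1}(e_u-E_Q\beta)$, bound $\|\beta\|\le\sqrt{d}/\sigma_d$, and control $\|\widehat{Q}^{-1}\|_{\mathrm{op}}$ in a single step via Weyl's inequality. This is the more standard linear-system perturbation argument: it eliminates the interpolation entirely (the paper's derivative bound is essentially a mean-value form of the resolvent identity you invoke, and the paper in fact re-proves Weyl inline when it lower-bounds $\|(Q+F)v\|$), and it makes the role of the hypothesis ``$3\sqrt{\alpha_P d^3/\delta}/\sigma_d^2\le 0.08$'' completely transparent: it forces $\|E_Q\|_F$ to be a small constant multiple of $\sigma_d$ (about $0.038\,\sigma_d$ after the $\delta/2$ substitution), so $\|\widehat{Q}^{-1}\|_{\mathrm{op}}\le 1.04/\sigma_d$ on the good event, and the two terms then sum to roughly $2.95\,\sqrt{\alpha_P d^3/\delta}/\sigma_d^2\le 3\,\sqrt{\alpha_P d^3/\delta}/\sigma_d^2$, so your claimed constant does check out. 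One small note: your claim $\sigma_d\le 1$ deserves a line of justification ($Q$ is positive semidefinite with diagonal entries at most $1$, so its smallest eigenvalue is at most $\mathrm{trace}(Q)/d\le 1$); the weaker bound $\sigma_d\le\sqrt{d}$, which the paper uses, also suffices for folding the lower-order term into the leading one. Your treatment of the known-features case coincides with the paper's.
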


\begin{proof}
Recall we are in the following regression setting: data consists of pairs $(x_i,y_i)$ where $x_i\in [-1,1]^d$ and $y_i\in[-1,1]$; indices $i\in[n]$ are drawn for the input sample $A$ and  target set $B$ from a joint distribution, $(A,B)\sim P$. The goal, given the training data and a description of $P$ (or sample access to $P$), is to compute linear coefficients $\beta\in\mathbb{R}^d$, such that the mean squared error over target indices $i\in B$ is as small as possible, namely, to minimize  $\E_{i\sim B}[(x_i^\top \beta-y_i)^2]$. Our results will be parameterized in terms of $\alpha_P$, the mean squared estimation accuracy that Theorem~\ref{thm1} or Theorem~\ref{thm:full-information} affords us on distribution $P$ (in the scalar mean estimation setting).

As is standard in least squares regression, note that the expression we are minimizing, $\E_{i\sim B}[(x_i^\top\beta-y_i)^2]$, is positive semidefinite as a function of $\beta$, and thus is minimized when its gradient with respect to $\beta$ equals 0. Hence we solve $\E_{i\sim B}[x_i(x_i^\top \beta-y_i)]=0$, or equivalently, $\E_{i\sim B}[x_i x_i^\top]\beta=\E_{i\sim B}[x_i y_i]$, which has solution $\beta=(\E_{i\sim B}[x_i x_i^\top])^{-1} \E_{i\sim B}[x_i y_i]$. Now, for any $i$, each of the $d^2$ entries of the matrix $x_i x_i^\top$ is in $[-1,1]$, and thus its average value over $i$ in the target set $B$ can be estimated to within mean squared error $\alpha_P$ based on its values for $i$ in the sample set $A$, from Theorem~\ref{thm1} or~\ref{thm:full-information}; the same holds for each of the $d$ entries in the vector $x_i y_i$.

Let $Q=\E_{i\sim B}[x_i x_i^\top]$, and let $Q+F$ (referred to as $\widehat{Q}$ in definition~\ref{def:regression}) be the random variable representing the estimate of $Q$ given by Theorem~\ref{thm1} or~\ref{thm:full-information}, where the square of each entry of $F$ has expectation at most $\alpha_P$; hence by Markov's inequality, for any $\delta>0$, with probability at least $1-\delta$ the matrix $F$ has Frobenius norm at most $\sqrt{\alpha_P d^2/\delta}$. Correspondingly, let $u=\E_{i\sim B}[x_i y_i]$, and let $u+g$ be the random variable  representing the estimate of $u$ returned by our algorithm, where, except with probability $\delta$, the vector $g$ has length at most $\sqrt{\alpha_P d/\delta}$. Taking the union bound, we have that except with probability $2\delta$, the bounds on both $F$ and $g$ hold, and we analyze this case below.

As described above, the optimal linear coefficients are given by $\beta=Q^{-1} u$, and meanwhile our estimate is $\hat{\beta}=(Q+F)^{-1}(u+g)$. We bound the discrepancy $||\beta-\hat{\beta}||$ via the triangle inequality, first bounding the change induced by adding $g$, and then bounding the change from adding $F$: \[||\beta-\hat{\beta}||=||Q^{-1} u - (Q+F)^{-1}(u+g)|| \leq ||Q^{-1} u - Q^{-1}(u+g)||+||Q^{-1} (u+g) - (Q+F)^{-1}(u+g)||\]

The first term on the right hand side equals $||Q^{-1} g||$, which we bound as the product of the length of $g$ and the largest singular value of $Q^{-1}$, which is the inverse of the smallest singular value of $Q$, which we have denoted $\sigma_d$. Thus we have $||Q^{-1} g||\leq \frac{1}{\sigma_d}\sqrt{\alpha_P d/\delta}$.

Bounding the second term is slightly more involved. The goal is to bound $||[Q^{-1}-(Q+F)^{-1}](u+g)||$. We first bound the largest singular value of the matrix $Q^{-1}-(Q+F)^{-1}$. For any $\lambda\in[0,1]$, consider interpolating between $Q$ and $Q+F$ to get $Q+\lambda F$. Let $s_\lambda$ be the smallest singular value of this matrix, and let $v_\lambda$ be the corresponding singular vector, with $||v||=1$. Since $Q$ has smallest singular value $\sigma_d$, we have $||Q v||\geq \sigma_d$; since $F$ has Frobenius norm at most $\sqrt{\alpha_P d^2/\delta}$ and the Frobenius norm bounds the largest singular value, we have $||F v|| \leq \sqrt{\alpha_P d^2/\delta}$, and by the triangle inequality, the difference of these two expressions is a lower bound on $s$: $s=||(Q+F) v||\geq \sigma_d-\sqrt{\alpha_P d^2/\delta}$.

Consider the matrix $(Q+\lambda F)^{-1}$ as we linearly move $\lambda$ from 0 to 1. The derivative with respect to $\lambda$ of this matrix inverse is $-(Q^{-1}+\lambda F) F (Q+\lambda F)^{-1}$, which thus has largest singular value at most the product of our bounds on the singular values for the 3 terms: \begin{equation}\label{eq:regression-term}\frac{\sqrt{\alpha_P d^2/\delta}}{(\sigma_d-\sqrt{\alpha_P d^2/\delta})^2}\end{equation}

Since $||u+g||\leq \sqrt{d}+\sqrt{\alpha_P d/\delta}$ from summing bounds on $||u||$ and $||g||$, we multiply this by Equation~\ref{eq:regression-term}---bounding the amount the matrix $(Q+\lambda F)^{-1}$ changes as we interpolate from $Q$ to $Q+F$---to get our total bound for the second triangle inequality term. Adding this to the bound on the first term, we have \begin{equation}\label{eq:regression-mess}||\beta-\hat{\beta}||=||(Q+F)^{-1}(u+g)-Q^{-1}u||\leq \frac{1}{\sigma_d}\sqrt{\alpha_P d/\delta} + (\sqrt{d}+\sqrt{\alpha_P d/\delta})\frac{\sqrt{\alpha_P d^2/\delta}}{(\sigma_d-\sqrt{\alpha_P d^2/\delta})^2}\end{equation}

To simplify this bound, consider the case that $\frac{\sqrt{\alpha_P d^3/\delta}}{\sigma_d^2}\leq c$.

Since matrix $Q$ has entries in $[-1,1]$, its singular values are at most $\sqrt{d}$, and thus $\frac{\sqrt{d}}{\sigma_d}\geq 1$, yielding $\frac{\sqrt{\alpha_P d^2/\delta}}{\sigma_d}\leq c$, and implying that if we replace the denominator $(\sigma_d-\sqrt{\alpha_P d^2/\delta})^2$ in Equation~\ref{eq:regression-mess} by simply $\sigma_d^2$ then the expression will decrease by at most a factor of $(1-c)^2$. Similarly, in Equation~\ref{eq:regression-mess} the parenthetical term $(\sqrt{d}+\sqrt{\alpha_P d/\delta})$ has second part bounded by $\sqrt{\alpha_P d/\delta}\leq c\frac{\sigma_d}{\sqrt{d}} \leq c$ and thus replacing $(\sqrt{d}+\sqrt{\alpha_P d/\delta})$ by simply $\sqrt{d}$ will decrease the term by at most a factor of $\frac{1}{1+c}$. Thus the right hand side of Equation~\ref{eq:regression-mess} is the sum of two terms, $\frac{\sqrt{\alpha_P d/\delta}}{\sigma_d}$, and a term $\frac{\sqrt{\alpha_P d^3/\delta}}{\sigma_d^2}$ times a factor between $1$ and $\frac{1+c}{(1-c)^2}$; the first term is clearly at most the second term since $\sigma_d\leq \sqrt{d}\leq d^2$, so thus we have $||\beta-\hat{\beta}||\leq (1+\frac{1+c}{(1-c)^2})\frac{\sqrt{\alpha_P d^3/\delta}}{\sigma_d^2}$. Substituting $2\delta\rightarrow \delta$ so that the overall probability of failure becomes $\delta$, the proportionality constant becomes $\sqrt{2}(1+\frac{1+c}{(1-c)^2})$, which for $c\leq 0.0378$ yields a constant of 3, as desired. Thus, in the context of the theorem, substituting in $\frac{1}{2}\delta$ for $\delta$, when $3\frac{\sqrt{\alpha_P d^3/(\delta/2)}}{\sigma_d^2}\leq 0.08$, we will have $c\leq \frac{0.08\sqrt{2}}{3}<0.378$ as desired.

In the simpler case where the independent variables $x_i$ are known, and hence $Q$ does not need to be estimated, our probability of failure is $\delta$ instead of $2\delta$, and only the first term from Equation~\ref{eq:regression-mess} appears, immediately yielding the other part of the theorem.

\end{proof}
\end{document}